\documentclass[journal]{IEEEtran}
\IEEEoverridecommandlockouts

\usepackage{cite}
\usepackage{xr}
\usepackage{xcite}
\usepackage{amsmath,amssymb,amsfonts}
\usepackage{multirow}
\usepackage{subfigure}
\usepackage{graphicx}
\usepackage{textcomp}
\usepackage{xcolor}
\usepackage{hyperref}
\usepackage{amsbsy}
\usepackage{amsthm}
\usepackage{algorithm,algpseudocode}
\usepackage{empheq}

\algnewcommand{\Inputs}[1]{%
	\State \textbf{Inputs:}
	\Statex \hspace*{\algorithmicindent}\parbox[t]{.8\linewidth}{\raggedright #1}
}
\algnewcommand{\Initialize}[1]{%
	\State \textbf{Initialize:}
	\Statex \hspace*{\algorithmicindent}\parbox[t]{.8\linewidth}{\raggedright #1}
}

\usepackage{bm}

\newcommand{\br}[1]{\left({#1}\right)}
\newcommand{\bc}[1]{\left\{{#1}\right\}}
\newcommand{\bsq}[1]{\left[{#1}\right]}
\renewcommand{\b}[1]{\ensuremath{\mathbf{#1}}} 
\newcommand{\norm}[1]{\ensuremath{\left\|#1\right\|}} 

\def \x {{\b{x}}}

\def \r {{\b{r}}}
\def \s {{\b{s}}}
\def \v {{\b{v}}}

\def \u {{\b{u}}}

\def \lam {{{\lambda}}}

\def \cL {{\mathcal{L}}}
\def \cN {{\mathcal{N}}}

\def \cQ {{\mathcal{Q}}}

\def \uu {{\b{U}}}
\def \uv {{\b{V}}}
\def \ux {{\b{X}}}
\def \uy {{\b{Y}}}
\def \uz {{\b{Z}}}

\def \Cn {{\mathbb{C}}}
\def \Gn {{\mathbb{G}}}
\def \In {{\mathbb{I}}}

\def \Rn {{\mathbb{R}}}

\theoremstyle{remark}
\newtheorem{rem}{\bf Remark}
\newtheorem{theorem}{Theorem}
\newtheorem{lemma}{Lemma}

\def\BibTeX{{\rm B\kern-.05em{\sc i\kern-.025em b}\kern-.08em
    T\kern-.1667em\lower.7ex\hbox{E}\kern-.125emX}}

\begin{document}

\title{Communication-Constrained Energy-Optimal Trajectory Generation for Quadrotor UAVs in Urban Environments}

\author{
    Prateek Priyaranjan Pradhan,
    Ketan Rajawat,
    and Mangal Kothari
    \thanks{
    Prateek Priyaranjan Pradhan is with the Department of Aerospace Engineering, Indian Institute of Technology Kanpur, Kanpur, India 
    (e-mail: prateekpp20@iitk.ac.in).
    }
    \thanks{
    Ketan Rajawat is with the Department of Electrical Engineering, Indian Institute of Technology Kanpur, Kanpur, India 
    (e-mail: ketan@iitk.ac.in).
    }
    \thanks{
    Mangal Kothari was with the Department of Aerospace Engineering, Indian Institute of Technology Kanpur, Kanpur, India. He is currently with ADASI, EDGE Group, Abu Dhabi, UAE. This work was carried out while he was with the Indian Institute of Technology Kanpur 
    (e-mail: mangalgnc@gmail.com).
    }
}

\maketitle

\begin{abstract}

Communication-aware trajectory generation for unmanned aerial vehicles (UAVs) operating in urban environments requires simultaneous consideration of vehicle dynamics, wireless communication quality, obstacle avoidance, and onboard energy limitations. In such missions, UAVs must navigate through obstacle-rich environments while ensuring reliable relay of mission-critical sensory information to ground infrastructure. This results in a highly nonlinear and nonconvex optimal control problem involving coupled communication and flight-dynamics constraints. This paper presents a communication-constrained energy-optimal trajectory generation framework for quadrotor UAVs operating in urban environments. The proposed formulation incorporates full rigid-body quadrotor dynamics, realistic urban wireless communication models, cumulative data throughput constraints, and obstacle avoidance requirements within a unified free-final-time optimal control framework. Unlike conventional approaches based on simplified kinematic or point-mass models, the proposed formulation generates dynamically feasible trajectories that remain directly implementable on practical aerial platforms.
The resulting nonconvex optimal control problem is solved iteratively using sequential convex programming (SCP), enabling computationally tractable trajectory generation while preserving dynamic feasibility and communication constraints. Numerical simulations are presented for multiple urban mission scenarios involving varying communication-throughput requirements and obstacle configurations. The results demonstrate the ability of the proposed framework to generate energy-efficient and communication-aware trajectories while adapting mission duration according to data relay requirements. The proposed methodology provides a practical framework for autonomous UAV operations requiring reliable communication in dense urban environments.

\end{abstract}

\begin{IEEEkeywords}
Communication-aware trajectory optimization,
free-final-time optimal control,
quadrotor UAV,
sequential convex programming,
urban autonomous navigation,
obstacle avoidance,
energy-optimal trajectory generation,
UAV communication relay.
\end{IEEEkeywords}

\section{Introduction}

Unmanned aerial vehicles (UAVs), also commonly known as drones, promise a wide variety of exciting applications in the modern world~\cite{survey}. In earlier decades, UAVs were mainly used in military applications and deployed in hostile territory for remote surveillance and armed attacks to reduce pilot losses~\cite{survey}. Thanks to recent advancements in digital communication, sensing technologies, and robotics, UAVs have extended their applications to civilian and commercial sectors~\cite{survey2}. With that, the demand for UAVs in aerial inspection, photography, search and rescue, package delivery, intelligent transportation monitoring, and persistent aerial surveillance is growing rapidly in urban cities.

Trajectory generation and autonomous navigation for UAVs operating in constrained urban environments have remained active areas of research over the past decade. Earlier works investigated robust UAV path planning in uncertain urban environments using probabilistically robust rapidly-exploring random trees (RRTs)~\cite{kothari2013rrt}, urban path-following strategies under wind disturbances~\cite{kothari2014urban}, and distributed probabilistic motion planning algorithms for multi-agent systems~\cite{kothari2013distributed}. In parallel, significant research efforts have focused on cooperative UAV systems, target-centric formations, distributed optimization, and multi-agent coordination strategies~\cite{sen2019circum,sen2022distributed,pradhan2023distributed,pradhan2024circumnavigation}. More recently, communication-aware autonomy and low-bandwidth UAV networking architectures have gained increasing attention due to their relevance in swarm robotics, distributed sensing, and network-enabled autonomous systems~\cite{samshad2025cnet}.

Due to the infrastructural developments in urban cities, generating a feasible and safe trajectory autonomously for a UAV is not only desirable but also necessary for the execution of any task. Failing to generate a safe trajectory that obeys the vehicular dynamics may result in loss of the vehicle payload or even harm to human lives in the environment. Along with generating safe trajectories, reliable UAV communication is essential in various applications such as urban traffic monitoring, infrastructure inspection, package delivery, and city surveillance. Existing literature broadly classifies UAV-enabled communication into two categories: UAV as an aerial base station~\cite{UAVasBS} providing seamless wireless coverage within the serving area, and UAV as a mobile relay or data collector~\cite{UAVasRelay}. The presented work focuses on the latter case in an urban environment. Such scenarios are common when the UAV has collected sensory data but processing the data onboard is inefficient due to limited computational resources. In these situations, the UAV can transmit the data to a ground station (GS) for further processing, saving energy for longer endurance. Figure~\ref{fig:urban_env_scene} illustrates a scenario similar to the problem discussed in this paper, where a UAV is required to generate a feasible and optimal trajectory from a specified starting point to a goal position in an urban environment while avoiding obstacles (buildings and trees) and relaying data to a GS.

\begin{figure}[H]
 \centerline{
 \includegraphics[width=0.6\columnwidth]{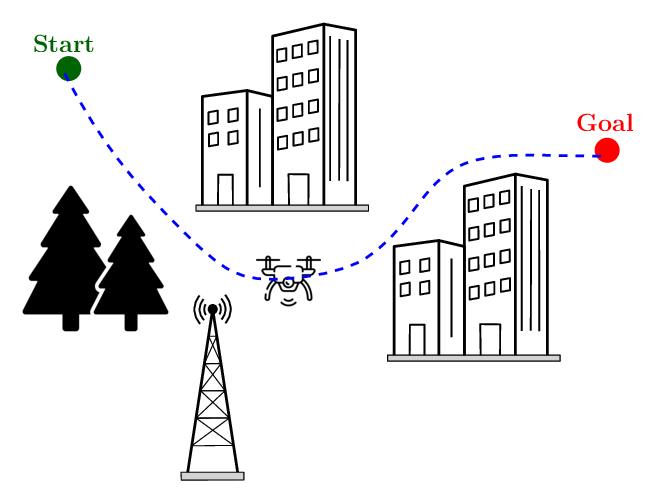}}
 \caption{{Illustration of a UAV mission scenario in an urban environment}}
 \label{fig:urban_env_scene}
\end{figure}

One critical issue in UAV-enabled wireless communication is the limited onboard energy of UAVs, which needs to be efficiently utilized to enhance communication performance and prolong the UAV's endurance. The authors of~\cite{zhang1st} were the first to propose an energy-efficient UAV communication model for a quadrotor UAV, where the energy consumption is derived as a function of UAV velocity. Adopting the energy consumption model in~\cite{zhang1st}, a robust resource allocation algorithm was investigated in~\cite{multiUser} by jointly optimizing the UAV trajectory and the transmit beamforming vector. Moreover, a 3D energy consumption model was proposed in~\cite{Sunetal}, assuming that the UAV moves smoothly with small acceleration and constant cruising speed.

However, these works consider the UAV as a point mass model and do not take into account the rigid body dynamics. Since kinematic equations do not consider the forces that generate the motion, the resulting designed trajectory is difficult to implement directly in practice. In~\cite{3dtraj}, the authors presented an energy model with rigid body dynamics of a quadrotor UAV for generating an energy-efficient UAV communication trajectory. However, their work considers an obstacle-free environment and is not suitable for urban scenarios where the UAV needs to avoid infrastructure and trees, as shown in Figure~\ref{fig:urban_env_scene}. In our earlier work~\cite{sandhu2024scp}, a minimum-time trajectory optimization framework for a 6-DoF quadrotor UAV was developed using successive convexification techniques, demonstrating the suitability of sequential convex programming (SCP) for solving highly nonlinear UAV optimal control problems involving full rigid-body dynamics. However, communication-aware mission constraints were not considered in that formulation.

In this work, we generate a collision-free, energy-optimal trajectory for a quadrotor UAV operating in an urban environment while relaying data to a ground station. Unlike conventional trajectory planning approaches that optimize only geometric path length or mission time, the proposed formulation simultaneously considers:
\begin{itemize}
    \item full rigid-body quadrotor dynamics,
    \item cumulative communication throughput requirements,
    \item realistic urban wireless channel models,
    \item obstacle avoidance constraints, and
    \item energy-efficient trajectory generation.
\end{itemize}

Instead of using a point mass model, we consider the full rigid-body dynamics of the UAV. The trajectory generation problem is formulated as a free-time optimal control problem (OCP) to minimize the quadrotor's energy consumption. By ``free-time,'' we mean that the final time of the OCP is not predefined; rather, the mission duration itself is an optimization variable. The resulting nonconvex OCP is solved iteratively using sequential convex programming (SCP)~\cite{MAO20174063}. The SCP framework enables computationally tractable generation of dynamically feasible and communication-aware trajectories while providing strong convergence characteristics for highly nonlinear constrained optimal control problems.

The primary contributions of this work are summarized as follows:
\begin{itemize}
    \item Formulation of a communication-constrained energy-optimal UAV trajectory generation problem incorporating full rigid-body dynamics, realistic urban communication models, and obstacle avoidance constraints.
    
    \item Development of an SCP-based solution methodology for solving the resulting nonconvex free-final-time optimal control problem.
    
    \item Comprehensive numerical validation in both obstacle-free and urban environments demonstrating the ability of the proposed framework to generate dynamically feasible and communication-aware trajectories under varying data throughput requirements.
\end{itemize}

The remainder of this paper is organized as follows. Section~\ref{sec:problem} formulates the energy-optimal trajectory generation problem, considering UAV dynamics, communication constraints, and obstacle avoidance. Section~\ref{sec:convex_sub_prob} presents the solution methodology based on successive convexification. Section~\ref{sec:simulation} validates the approach through numerical simulations in various scenarios. Finally, Section~\ref{sec:conclusion} concludes the paper and discusses future research directions.

\section{Problem Formulation}
\label{sec:problem}

This section presents the mathematical framework for the communication-constrained energy-optimal trajectory generation problem. The objective is to generate dynamically feasible UAV trajectories that simultaneously minimize energy consumption, satisfy cumulative communication throughput requirements, and ensure safe navigation through obstacle-rich urban environments.

Unlike conventional trajectory planning problems that primarily optimize geometric path length or mission time using simplified kinematic models, the present formulation incorporates:
\begin{itemize}
    \item full rigid-body quadrotor dynamics,
    \item communication-aware mission constraints,
    \item realistic urban wireless channel models,
    \item obstacle avoidance requirements, and
    \item free-final-time optimal control.
\end{itemize}

The resulting problem is highly nonlinear and nonconvex due to the coupled translational and rotational vehicle dynamics, nonlinear communication-throughput constraints, obstacle avoidance requirements, and free-final-time formulation. The proposed framework integrates these interacting components within a unified optimal control framework suitable for practical autonomous UAV operations in urban environments.

\subsection{Vehicle Model}

\begin{figure}
 \centerline{
 \includegraphics[width=0.7\columnwidth]{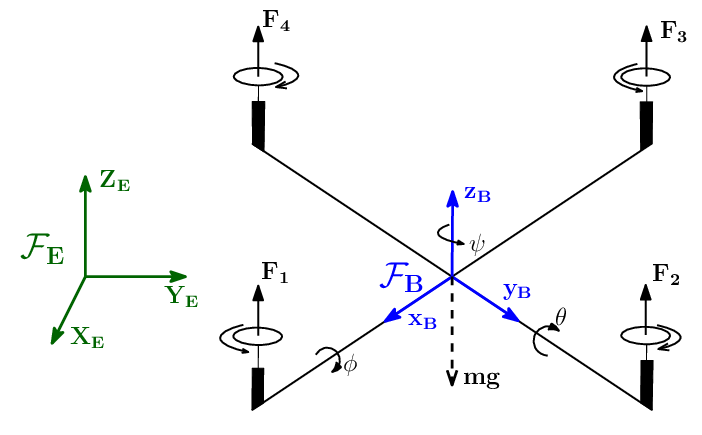}}
 \caption{A schematic diagram of a conventional quadrotor}
 \label{fig:quad_schematic}
\end{figure}

The dynamics of a quadrotor UAV, illustrated in Fig.~\ref{fig:quad_schematic}, are governed by coupled nonlinear rigid-body equations of motion. Similar modeling approaches have been extensively investigated in earlier works involving quadrotor flight dynamics, variable-pitch UAV systems, and tailsitter aerial platforms~\cite{shastry2018bemt,bhargavapuri2019vpq,swarnkar2018tailsitter}. 

The UAV position in the Earth-fixed frame $\mathcal{F}_{\text{E}}$ is denoted by
\[
\mathbf{r}(t) = [x(t), y(t), z(t)]^\top,
\]
while its orientation is represented using Euler angles
\[
\boldsymbol{\eta}(t) = [\phi(t), \theta(t), \psi(t)]^\top,
\]
corresponding to roll, pitch, and yaw, respectively. The body-fixed frame $\mathcal{F}_{\text{B}}$ is centered at the UAV center of mass, with its axes aligned with the principal inertia directions.

The nonlinear equations of motion are given by~\cite{Beard}
\begin{equation}
\label{eqn:vehicle_dynamics}
\begin{aligned}
\dot{\mathbf{r}}(t) &= \mathbf{v}(t), \\
\dot{\mathbf{v}}(t) &= -g\mathbf{e}_3 +
\frac{1}{m}\mathbf{R}(\boldsymbol{\eta}(t))\mathbf{F}(t), \\
\dot{\boldsymbol{\eta}}(t) &= \mathbf{W}(\boldsymbol{\eta}(t))\boldsymbol{\omega}(t), \\
\dot{\boldsymbol{\omega}}(t) &=
\mathbf{J}^{-1}
\left(
\boldsymbol{\tau}_c(t)
-
\boldsymbol{\omega}(t)\times
\mathbf{J}\boldsymbol{\omega}(t)
\right).
\end{aligned}
\end{equation}

Here, $\mathbf{v}(t)$ denotes the translational velocity vector, $g$ represents the gravitational acceleration, and $m$ denotes the UAV mass. The matrix $\mathbf{R}(\boldsymbol{\eta}(t))$ represents the rotation matrix transforming vectors from the body-fixed frame $\mathcal{F}_{\text{B}}$ to the Earth-fixed frame $\mathcal{F}_{\text{E}}$. The vector $\mathbf{F}(t)$ denotes the total thrust generated by the four rotors, while $\mathbf{W}(\boldsymbol{\eta}(t))$ maps body angular rates to Euler angle rates. Furthermore, $\mathbf{J}$ denotes the inertia matrix, $\mathbf{e}_3 = [0,0,1]^\top$ is the unit vector along the vertical axis of the Earth-fixed frame, and $\boldsymbol{\tau}_c(t)$ denotes the control torque vector.

The thrust generated by the $i^{\text{th}}$ rotor is modeled as
\[
F_i(t) = C_T \Omega_i^2(t),
\]
where $C_T$ denotes the aerodynamic thrust coefficient and $\Omega_i(t)$ represents the rotor angular speed. The control torques about each body axis are generated through differential thrust inputs and depend on the quadrotor geometry, where $l$ denotes the arm length.

The adopted rigid-body model captures the coupled translational and rotational dynamics of the UAV, including gyroscopic coupling effects and nonlinear attitude-motion interactions. Such a formulation provides a realistic and dynamically consistent framework for communication-aware trajectory optimization and enables generation of trajectories that remain directly implementable on practical aerial platforms.

\subsection{Communication Channel Model}

As shown in Fig.~\ref{fig:communication}, in urban environments, the communication link between a UAV and a GS can be characterized by both direct (line-of-sight) and indirect (reflected) paths.  The probability of establishing a direct communication link ($\mathbb{P}_{\text{dir}}(t)$) in this environment is modeled using a sigmoid function as~\cite{OptimalLAP}:

$$
\mathbb{P}_{\text{dir}}(t) = \frac{1}{1 + a_1 \exp(-a_2 [\alpha(t) - a_1])}
$$
where, $a_1$ and $a_2$ are parameters that depend on the specific urban environment. The elevation angle $\alpha(t)$ between the UAV and GS is given by: 

$$
\alpha(t) = \frac{180}{\pi} \sin^{-1}\left(\frac{z(t) - z_{\text{GS}}}{\Delta \mathbf{r}(t)}\right)
$$

Here, $z(t)$ and $z_{\text{GS}}$ are the altitudes of the UAV and the GS, respectively, while $\Delta \mathbf{r}(t) \coloneqq \norm{\r(t)-\r_{\text{GS}}} $ denotes the Euclidean distance between them. The communication link quality is characterized by the channel coefficient $\mathcal{C}(t) = \sqrt{\gamma(t)}\tilde{\mathcal{C}}(t)$. 
This coefficient depends on  the large-scale channel power gain $\gamma(t)$, and small-scale fading, represented by the random variable $\tilde{\mathcal{C}}(t)$, with an expected squared magnitude of unity $\br{\mathbb{E}\bsq{|\tilde{\mathcal{C}}(t)|^2 =1}}$. Also,
the channel power gain, $\gamma(t)$, varies depending on the link type:

$$
\gamma(t) = \begin{cases}
    \gamma_0 \Delta\mathbf{r}^{-\beta}(t) & \text{Direct link} \\
    \zeta \gamma_0 \Delta\mathbf{r}^{-\beta}(t) & \text{Indirect link}
\end{cases}
$$
where $\gamma_0$ is the reference channel power at 1 meter distance, the path loss exponent $\beta$ characterizes the rate at which signal strength decreases with distance, and $\zeta \in (0,1)$ represents additional attenuation due to reflections in the indirect path.

    \begin{figure}
 \centerline{
 \includegraphics[width=0.65\columnwidth]{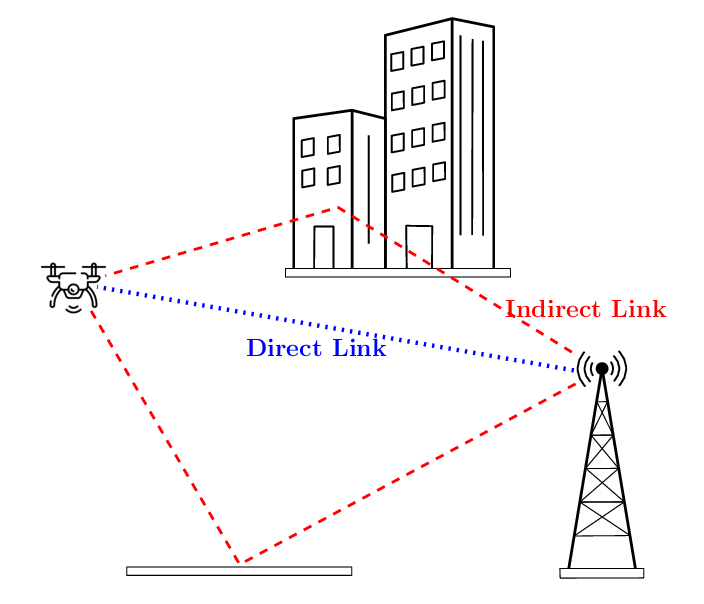}}
 \caption{{Illustration of different types of communication links in an urban city}}
 \label{fig:communication}
 \end{figure}

Given that the probability of an indirect link is complementary to that of a direct link ($\mathbb{P}_{\text{indir}} = 1 - \mathbb{P}_{\text{dir}}$), we can express the expected channel power gain as:

\begin{align}
\mathbb{E}[|\mathcal{C}(t)|^2] &= {\mathbb{P}}_{\text{dir}}\gamma_0 \Delta\mathbf{r}^{-\beta}(t) + {\mathbb{P}}_{\text{indir}} \zeta \gamma_0 \Delta\mathbf{r}^{-\beta}(t) \nonumber\\ &= \overline{\mathbb{P}}_{\text{dir}} \gamma_0 \Delta\mathbf{r}^{-\beta}(t)
\end{align}
where $\overline{\mathbb{P}}_{\text{dir}} = \zeta + (1 - \zeta) \mathbb{P}_{\text{dir}}(t)$ is the effective direct link probability.
Then, the instantaneous achievable communication rate between the UAV and ground station can be calculated using the Shannon-Hartley theorem ~\cite{ShanonHartley}:

\begin{equation}
\mathcal{R}(t) = w \log_2 \left(1 + \frac{P_{\text{tx}} |\mathcal{C}(t)|^2}{\sigma^2 \kappa_0}\right)
\end{equation}
where, $w$ denotes the bandwidth, $P_{\text{tx}}$ is the transmission power, $\sigma^2$ represents the noise power at the receiver, and $\kappa_0 > 1$  is the gap to capacity accounting for practical coding schemes.

Finally, the accumulated communication data throughput $\mathcal{Q}(t)$ over time can be quantified as:

\begin{align}
\mathcal{Q}(t) &= \int_0^t \mathbb{E}[\mathcal{R}(\tau)] \, d\tau \leq \int_0^t w \log_2 \left(1 + \frac{P_{\text{tx}}\mathbb{E}[|\mathcal{C}(\tau)|^2]}{\sigma^2 \kappa_0}\right) \, d\tau \nonumber  \\
&= \int_0^t w \log_2 \left(1 + \frac{b \overline{\mathbb{P}}_{\text{dir}}(\tau)}{\Delta\mathbf{r}^\beta}\right) \, d\tau
\end{align}

where the inequality follows from Jensen's inequality applied to the concave $\log_2(\cdot)$ function, i.e.\ $\mathbb{E}[\log_2(1+X)] \leq \log_2(1+\mathbb{E}[X])$, and $b = \frac{P_{\text{tx}}\gamma_0}{\sigma^2\kappa_0}$ consolidates several system parameters into a single term. This upper bound provides a tractable expression for trajectory optimization while maintaining accuracy for typical urban scenarios.

\subsection{Optimal Control Problem Formulation}


The trajectory generation problem is formulated as a free-time optimal control problem to minimize the quadrotor's total energy consumption over the mission time $T$, subject to dynamics, actuator limits, boundary conditions, communication throughput requirements, and obstacle avoidance constraints:

\begin{subequations}\label{eqn:min_energy_prob} 
\begin{align} 
    \min_{\u(t), T} ~& J(\x(t),\u(t),T) = \int_0^T \u(t)^\top \u(t) dt \\
    \text{s.t.} ~ 
    & \dot{\x} = f(\x(t), \u(t)), ~ t\in [0,T] \label{dynamics} \\ 
    & 0 \leq u_1(t) \leq u_{1\max}, ~ t\in [0,T] \label{control_constraint1}\\ 
    & |u_i(t)| \leq u_{i\max}, ~ i = 2,3,4, ~ t\in [0,T] \label{control_constraint2} \\
    & |\phi(t)| \leq \phi_{\max}  \label{state_constraint1}\\
    & |\theta(t)| \leq \theta_{\max} \label{state_constraint2}\\
    & \x(0) = \x_{\text{start}}  \label{boundary_constraint1}\\ 
    & \x(T) = \x_{\text{goal}} \label{boundary_constraint2}\\ 
    & \cQ(T) \geq \cQ_{\min}  \label{nonconvex_constraint1}\\ 
    &\norm{x(t)- x_{\text{o},j}, y(t)- y_{\text{o},j}} \geq d_{\text{s},j}, ~ \forall j, ~ t\in [0,T]\label{nonconvex_constraint2}
\end{align}
\end{subequations}
 The cost function, \( J \), represents the energy expenditure, which is a function of the control inputs.  The quadrotor dynamics are governed by the nonlinear system \( \dot{\x}(t) = f(\x(t), \u(t)) \) \eqref{dynamics}, where \( \x(t) \coloneqq [\r^\top, \v^\top, \boldsymbol{\eta}^\top, \dot{\boldsymbol{\eta}}^\top]^\top\in \Rn^{12}\) and \( \u(t) \coloneqq [\mathbf{F}, \boldsymbol{\tau}_c^\top]^\top\in \Rn^{4} \) are the state and control vectors, respectively.  Control constraints ensure that the vertical thrust remains non-negative and within the actuator limits \( 0 \leq u_1(t) \leq u_{1,\max} \) \eqref{control_constraint1}, while torque control inputs are bounded by \( |u_i(t)| \leq u_{i,\max} \), \( i = 2, 3, 4 \) \eqref{control_constraint2}, to reflect physical limitations.

State constraints include initial and terminal boundary conditions, ensuring that the quadrotor starts from \( \x(0) = \x_{\text{start}} \) \eqref{boundary_constraint1} and reaches a predefined final state \( \x(T) = \x_{\text{goal}} \) \eqref{boundary_constraint2}. Furthermore, roll and pitch angles are limited by \( |\phi(t)| \leq \phi_{\max} \)  and \( |\theta(t)| \leq \theta_{\max} \) \eqref{state_constraint1}--\eqref{state_constraint2}  to maintain stable flight conditions. A performance-related constraint ensures that the accumulated throughput \( \mathcal{Q}(T) \), computed as a function of state and control variables, meets or exceeds a minimum threshold \( \mathcal{Q}_{\min} \) \eqref{nonconvex_constraint1}, ensuring successful mission performance. Additionally, obstacle avoidance constraints are incorporated to prevent collisions by enforcing a safe distance from obstacles throughout the flight \eqref{nonconvex_constraint2}.


\section{Convex Subproblem Conversion}
\label{sec:convex_sub_prob}
The OCP \eqref{eqn:min_energy_prob} is inherently non-convex due to nonlinear quadrotor dynamics \eqref{dynamics}, non-convex communication constraints \eqref{nonconvex_constraint1}, and obstacle avoidance requirement \eqref{nonconvex_constraint2}.  The non-convex problems are typically difficult to solve efficiently and reliably. Moreover, the continuous time OCP  can also be viewed as an infinite dimensional optimization problem over functional spaces. In most of the cases, the solution to such infinite dimensional optimization problems is neither available in closed form nor computationally tractable to compute in real time. Thereby, we attempt to approximate the non-convex OCP by iteratively solving a sequence of convex subproblems. Figure (\ref{fig:flowchart}) presents an overview of the sequential convex programming (SCP) algorithm. The central component of this framework, shown as the "Convexification" block, linearizes non-convexities and approximates the infinite-dimensional problem by discretizing it into a finite-dimensional representation. The detailed structure of this framework is briefed in the subsequent subsections.    
\begin{figure}[H]
 \centerline{
 \includegraphics[width=1\columnwidth]{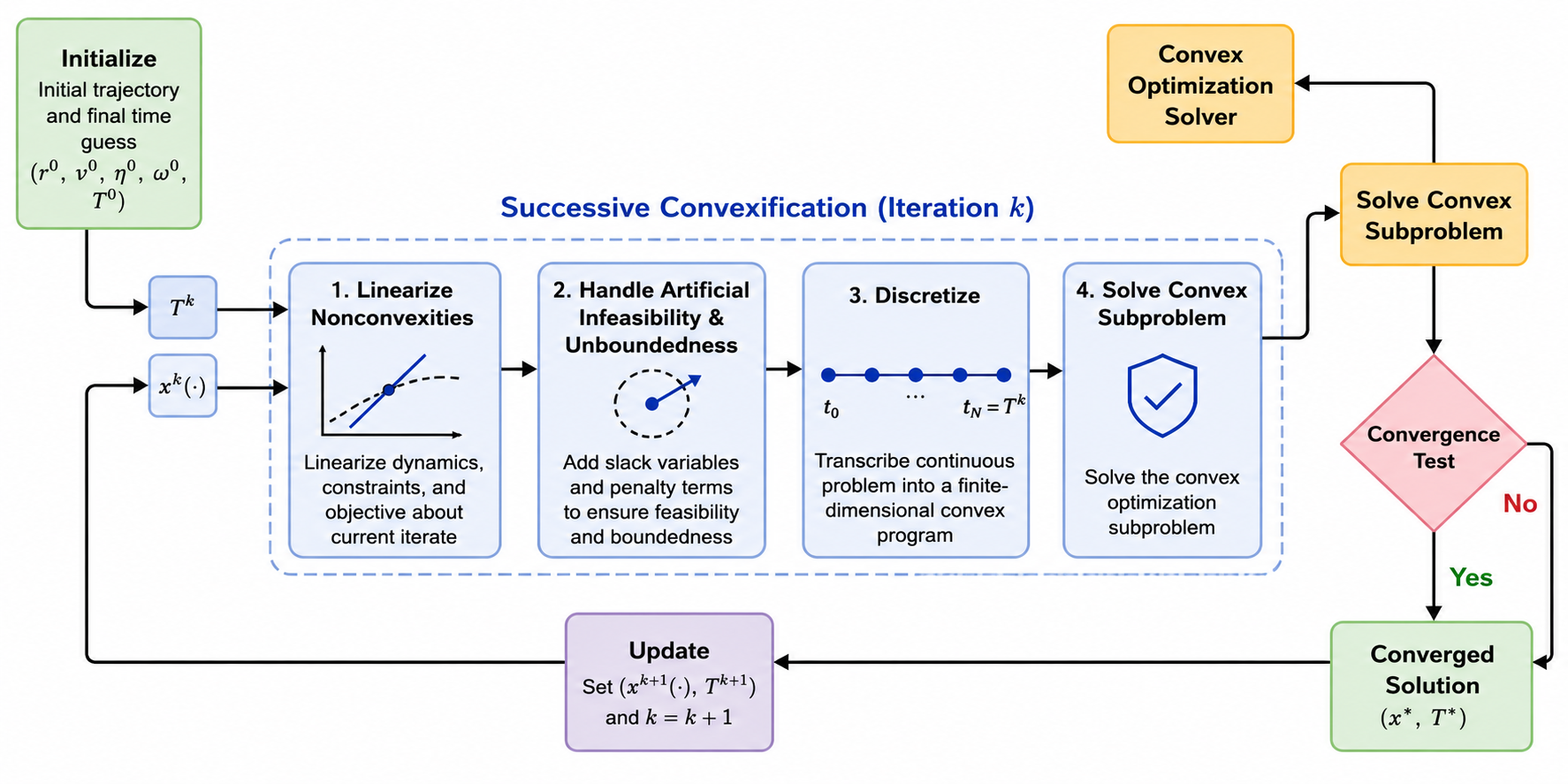}}
 \caption{{Flowchart of presented SCP framework}}
 \label{fig:flowchart}
 \end{figure}
\subsection{Time scaling}


In the free-time optimal control problem, the final time $T$ is an optimization variable rather than a fixed parameter. This variable time horizon presents numerical challenges for standard optimization solvers.  To address this challenge, we employ a time-scaling transformation that converts the problem into a fixed-time formulation. We define a normalized time variable $\tau \in [0, 1]$ that maps the original time interval $[0, T]$ to a fixed normalized interval. The relationship between the original time $t$ and the normalized time $\tau$ is given by $t = T \cdot \tau$. As $t$ goes from $0$ to $T$, $\tau$ goes from $0$ to $1$. This normalization allows for a consistent treatment of the problem regardless of the actual mission duration. The dynamics, boundary conditions, and cost function are then reformulated in terms of $\tau$ to reflect the normalized time scale. The time-scaling approach simplifies the optimization process, ensures consistent constraint enforcement, and enables the use of fixed-time optimization techniques. Hence, the dynamics  converted to normalized time is expressed as, 
\begin{equation}
   \frac{\text{d}\x}{{\text{d}}{\tau}} = \frac{\text{d}\x}{{\text{d}t}}\frac{\text{d}t}{{\text{d}}{\tau}} = {T \cdot f}\left({\x,\u}\right).
\end{equation}

The boundary conditions transform directly as 
\begin{align} 
\label{eq:33}
g_{ic}(\x(0)) = \x(0) - \x_{\text{start}}, ~~
g_{tc}(\x(1)) = \x(1) - \x_{\text{goal}}.
\end{align}

Similarly, the cost function (5a) in normalized time is given as 
\begin{align}
   J(\x(\tau),\u(\tau),T) &= h(\x(1), T)+ \int_0^1 \Gamma(\x(\tau), \u(\tau), T) \text{d}\tau \nonumber\\ 
   & = \int_0^1 T \cdot \u(\tau)^\top \u(\tau) \text{d}\tau.
\end{align} 
where the terminal cost $h(\x(1), T) \equiv 0 $ and the running cost $\Gamma(\x(\tau), \u(\tau), T) = T \cdot \u(\tau)^\top \u(\tau) $. 
Then, the OCP in the normalized time can be reformulated as, 

\begin{subequations}\label{eqn:min_energy_prob_normalized}
\begin{align}
    \min_{\u(\tau), T} ~& J(\x(\tau),\u(\tau),T) \\
    \text{s.t.} ~ 
    & \dot{\x} =  \overline{f}(\x(\tau), \u(\tau), T), ~ \tau\in [0,1]  \label{dynamics_normalized}\\ 
    & \u(\tau) \in \mathcal{U}, \quad \x(\tau) \in   \mathcal{X} \\ 
    & \x(0) = \x_{\text{start}}, ~ \x(1) = \x_{\text{goal}} \\
    & s(\x(\tau),\u(\tau),T) \leq 0 \label{nonconvex_constraints_combined}
\end{align}
\end{subequations}

Here, $\overline{f}(\x(\tau), \u(\tau), T) \coloneqq T\cdot f(\x(\tau), \u(\tau))$, and $\mathcal{U}, \mathcal{X}$ define the set containing the convex control and state constraints earlier defined in \eqref{control_constraint1}--\eqref{state_constraint1}. Also, the nonconvex constraints \eqref{nonconvex_constraint1}--\eqref{nonconvex_constraint2} in the original OCP \eqref{eqn:min_energy_prob}, are defined in \eqref{nonconvex_constraints_combined}.

\subsection{Initial Trajectory guess}
The successive convexification algorithm requires an initial reference trajectory to begin the iterative process. For simplified implementation, the initial guess  trajectory is constructed by linearly interpolating between the two boundary states:

\begin{equation}
  \x^\text{g}(\tau) = (1 - \tau)\x_{\text{start}} + \tau \x_{\text{goal}}, \, \, \, \text{for } \tau \in [0,1]. \label{5}
\end{equation}

Similarly, the control input guess is initialized as
\begin{equation}
    \u^\text{g}(\tau) = (1 - \tau)\u_0^{\text{g}} + \tau \u_f^{\text{g}}, \, \, \text{for } \tau \in [0,1], \label{6}
\end{equation}
where \( \u_0^{\text{g}} \) and \( \u_f^{\text{g}} \) are the initial and final guesses for the control inputs. Also,  $T^\text{g}$ is considred as the initial guess for the mission time.  This initialization strategy provides a dynamically infeasible but geometrically reasonable starting point that the algorithm can refine through successive iterations.

 While the algorithm demonstrates robust convergence properties from various initializations (detailed convergence analysis in Section \ref{sec:convergence}), a well-chosen initial guess can significantly reduce the number of iterations required. Specifically, an accurate starting estimate can reduce the number of iterations required for convergence, enhance the optimality of the resulting trajectory, and improve the overall viability of the converged solution.

\subsection{Linearization }
The first step in constructing a convex sub-problem involves addressing the inherent non-convexities in the OCP \eqref{eqn:min_energy_prob_normalized}. By replacing each non-convex term with its first-order approximation around the reference trajectory, the resulting sub-problem is guaranteed to be convex. Compared to second-order approximations, first-order linearizations offer a computational advantage, requiring significantly less processing effort. Hence, the corresponding Jacobian matrix ($A(\tau)$) at each iteration is calculated as,
\begin{align*}
{A}{(}{\tau}{)} & {\text{ = }}\,{\nabla}_{\x}{\overline{f}}\br{\x^r(\tau), \u^r(\tau), T^r} 
\end{align*}
where $\br{\x^r(\tau), \u^r(\tau), T^r}$ are the reference states, control inputs, and time at a particular iteration. The other Jacobians $, B(\tau) = \nabla_\u \overline{f}, F(\tau) = \nabla_T \overline{f}, C(\tau) = \nabla_\x {s}, D(\tau) = \nabla_\u {s}, G(\tau) = \nabla_T {s}$ can be similarly computed about the reference trajectories. The residues of nonlinear functions ($\overline{f}, s$) after linearization are calculated as
\begin{align*}
    \varrho{(}\tau{)} & \coloneqq\,{\overline{f}}\br{\x^r(\tau), \u^r(\tau), T^r}{-}{A}\x^r{(}\tau{)}{-}{B}{\u^r}{(}\tau{)}{-}{F}T^r \\ 
{\widehat{\varrho}}{(}\tau{)} & \coloneqq\,{s}\br{\x^r(\tau), \u^r(\tau), T^r}{-}{C}\x^r{(}\tau{)}{-}{D}{\u^r}{(}\tau{)}{-}{G}T^r
\end{align*}

However,  linearization  may introduce two fundamental challenges: artificial un-boundedness and artificial infeasibility. 
 Even when the initial (non-linear) constraints allow for a non-empty feasible set, two scenarios may arise in which the constraints become inconsistent after linearization:

\begin{itemize}
    \item The intersection of the linearized constraints may be empty. 
  
    \item The trust region~\cite{MAO20174063} may be so small that it restricts the solution variables to a region outside the feasible set. Consequently, the  intersection of the feasible area of the linearized constraints with the trust region may also be empty.  
    \end{itemize}

 To address this issue, we introduce slack variables often referred to as virtual controls $\nu(\tau)$ for dynamics constraints and virtual buffer zones $\nu_s(\tau)$ for nonconvex constraints.  The modified problem can be expressed as: 
\begin{subequations}
\label{eqn:J_lambda}
\begin{align} 
& \mathop{\min}\limits_{\u(\tau),T,\widehat{\nu}} J_{\lambda}( \x(\tau), \u(\tau), T, \widehat{\nu}), \label{9a} \\ 
& \quad \text{s.t. } \dot{\x}(\tau) = A\x(\tau) + B\u(\tau) + FT + \varrho(\tau) +\nu(\tau), \label{9b} \\ 
& \qquad \u(\tau) \in \mathcal{U},  \qquad \x(\tau) \in   \mathcal{X} \\ 
    & \qquad \x(0) = \x_{\text{start}}, ~~ \x(1) = \x_{\text{goal}} \\
& \qquad C\x(\tau) + D\u(\tau) + GT + \widehat{\varrho}(\tau) \leq \nu_s(\tau), \label{9f} \\ 
& \qquad \left\Vert \delta\x (\tau)\right\Vert + \left\Vert \delta\u (\tau)\right\Vert + \left\Vert \delta T \right\Vert\leq \Upsilon \label{eqn:trust_region_constraint}  
\end{align}
\end{subequations}
where $\widehat{\nu}$ is a shorthand notation used to denote the collection of virtual controls $(\nu, \nu_s)$. These virtual controls are auxiliary variables that do not correspond to physical control inputs of the UAV system. Rather, they are synthetic inputs introduced solely to ensure numerical stability and feasibility during the optimization process. In the final converged solution, these virtual controls $\widehat{\nu}(t)$ must vanish (i.e., equal zero), indicating that the obtained trajectory satisfies all original constraints without requiring artificial relaxations. To prevent artificial un-boundedness, we impose trust region constraints \eqref{eqn:trust_region_constraint}. The objective function \eqref{9a} incorporating penalties on the virtual  variables with the original objective function from \eqref{eqn:min_energy_prob_normalized} is expressed as 
\begin{subequations}
\begin{align}
J_{\lambda}(\x, \u, T, \widehat{\nu}) = 
&  \int_{0}^{1} \Gamma_{\lambda}(\x, \u, T, \nu, \nu_s) \, \text{d}\tau, \label{14} \\
\Gamma_{\lambda}(\x, \u, T, \nu, \nu_s) = & \Gamma(\x, \u, T) + \lambda P(\nu, \nu_s), \label{15}
\end{align}
\end{subequations}

where  \(P\) is a positive definite penalty function defined as:

\begin{equation}
P(y, z) = \left\Vert y \right\Vert_1 + \left\Vert z \right\Vert_1. \label{eqn:penalty_func} 
\end{equation}

\subsection{Discretization}

The optimal control problem \eqref{eqn:J_lambda}, after linearizing, is convex but   remains continuous in time. To make it numerically tractable,  we  discretize it by selecting $N$ evenly spaced temporal nodes:
\begin{align*}
&0 = \tau_{1}\,{<}\,{\cdots}\,{<}\,\tau_{N} \text{ = } {1} , ~ \nonumber\\ &\Delta \tau \coloneqq \tau_{k+1} - \tau_{k} = 1/(N-1) \,\,\, \forall \, k = 1, \ldots, N.  
\end{align*}

Zero-order hold (ZOH) is applied to the control inputs, keeping $\u = \u_k$ constant over each interval $\tau \in \bsq{\tau_k, \tau_{k+1}}$.  The  dynamics relating ${\x}_k$ to ${\x}_{k+1}$ is obtained through exact integration of the linearized system,
\begin{align}
{\dot{{\x}}}{(}\tau{)} = & {A}{(}\tau{)}{\x}{(}\tau{)} + {B}{(}\tau{)}{\u}_{k}  {+}{F}{(}\tau{)}T + {\varrho}{(}\tau{).} + \nu(\tau) \label{16} 
\end{align}

The solution for the above linear time varying system is given by:
\begin{align*}{\x}{(}\tau{)} =&  \Phi{(}\tau,\tau_{k}{)}{\x}{(}\tau_{k}{)} \\&\qquad{+}\mathop{\int}\nolimits_{\tau_{k}}\nolimits^{\tau}{\Phi}{(}\tau,{t}{)[}{B}{(}{t}{)}{\u}_{k}   {+}{F}{(}{t}{)}{T} + {\varrho}{(}{t}{)}+\nu(t){]}{\text{d}}{t} , 
\end{align*}
where ${\Phi}{(}\tau,\tau_{k}{)}$ is the state transition matrix  \cite{ogata} satisfying,
\begin{equation}
    {\dot{\Phi}}{(}\tau,\tau_{k}{)} = {A}{(}\tau{)}{\Phi}{(}\tau,\tau_{k}{)},~{\Phi}{(}\tau_{k},\tau_{k}{)} = {I}_{n}{.} 
\end{equation}
By assigning $\tau = \tau_{k+1}$, \eqref{16} forms a linear time-varying  difference equation, which advances the discrete-time state $\x_k$  to the subsequent discrete-time state $\x_{k+1}$,
    \begin{align}{\x}_{{k} + {1}} & = {A}_{k}{\x}_{k} + {B}_{k}{\u}_{k}  + {F}_{k}{T} + {\varrho}_{k} + \nu_k, \label{S32a} 
        \end{align}
    where the matrices are given by
    \begin{align*}
        {A}_{k} & = {\Phi}{(}\tau_{{k} + {1}},\tau_{k}{)},  \\ 
    {B}_{k} & = {A}_{k} \mathop{\int}\nolimits_{\tau_{k}}\nolimits^{\tau_{{k} + {1}}}{\Phi}{(}{t},\tau_{k}{)}^{{-}{1}}{B}{(}{t}{)}{\text{d}}{t},
    \end{align*}
and $F_k,\rho_k, \nu_k$ can be computed as that of $B_k$. 
Then, the OCP after discretization is formulated as follows, 
\begin{subequations} \label{eqn:discretized_convex_subproblem}
\begin{align} 
& \mathop{\min}\limits_{\uu,T,\widehat{{\uv}}}{\cal{L}}_{\lambda}{(}{\ux},{\uu},{T},\widehat{{\uv}}{)}, \label{13a} \\ 
& \quad {\text{s.t. }}{\x}_{{k} + {1}} = {A}_{k}{\x}_{k} + {B}_{k}{\u}_{k}  + {F}_{k}{T} + {\varrho}_{k} + {\nu}_{k}, \label{13b} \\ 
&\qquad \u_k \in \mathcal{U}, \qquad \x_k \in \mathcal{X}\\
&\qquad \x_1 = \x_{\text{start}}, ~ \x_N = \x_{\text{goal}} \\
& \qquad {C}_{k}{\x}_{k} + {D}_{k}{\u}_{k} + {G}_{k}{T} + {{\widehat{\varrho}}}_{k}\,\le{\nu}_{s,k}, \label{13f} \\ 
& \qquad {\left\Vert{{\delta}{\x}_{k}}\right\Vert} + {\left\Vert{{\delta}{\u}_{k}}\right\Vert} + {\left\Vert{{\delta}{T}}\right\Vert}\le\,{\Upsilon}{.} \label{13i} 
\end{align}
\end{subequations}
where $\ux, \uu, \widehat{\uv}$ are defined  as the \emph{super vectors} containing the states ($\x_k$), controls ($\u_k$), and virtual variables ($\hat{\nu}_k$) of all the temporal nodes, and The discretized cost function is defined as,  
 \begin{align}{\cal{L}}_{\lambda}{(}{\ux},{\uu},{T},\widehat{{\uv}}{)} & =  {\text{ trapz}}{(}{\Gamma}_{\lambda}^{N}{)},  \label{19a} 
 \end{align} 
where ${\Gamma}_{{\lambda},{k}}^{N}  = {\Gamma}_{\lambda}{(}{\x}_{k},{\u}_{k},{T},{\nu}_{k},{\nu}_{s,k}{)}$ and trapz(.) is trapezoidal numerical integration defined as,
  \begin{equation}
    {\text{trapz}}{}{(}{d}{)}\,{=}\,\frac{{\Delta}{\tau}}{2}\mathop{\sum}\limits_{{k} = {1}}\limits^{{N}{-}{1}}\left({d}_{k} + {d}_{{k} + {1}}\right){.} \label{20}
\end{equation} 


\subsection{Update Rule and Stopping Criterion}
The successive convexification algorithm requires mechanisms to assess solution quality and determine convergence. At each iteration, we compare the actual improvement in the nonlinear cost function with the predicted improvement from the convex approximation. We
define the nonlinear discretized cost function as 
\begin{align*}
    \widehat{J}_\lam(\ux, \uu, T) \coloneqq \text{trapz}\br{\widehat{\Gamma}_\lam^N}, 
\end{align*}
where $\widehat{\Gamma}_{\lam,k}^N \coloneqq {\Gamma}_{\lambda}\br{{\x}_{k},{\u}_{k},{T}, \delta_k, \bsq{s(\tau_k, \x_k,\u_k,T)}_+ }$. Here $\bsq{a}_+ \coloneqq \max\bc{0,a}$ and $\delta_k$ is the actual dynamics defect due to linearization, defined as
\begin{equation*}
    \delta_k \coloneqq \x_{k+1} - \br{A_k \x_k + B_k \u_k + F_k T + \varrho_k}
\end{equation*}
where  $\widehat{f}$ is the discretized version of the nonlinear system dynamics \eqref{dynamics_normalized}. 
The algorithm uses both linear as well as nonlinear augmented cost functions to compute the accuracy of convex approximation by a parameter $\rho$ given as,

\begin{align}
\rho&=\frac{{\widehat{J}}_{\lambda}{(}\ux^r,{\uu^r},{T^r}{)}{-}{\widehat{J}}_{\lambda}{(}{\ux}^{\star},{\uu}^{\star},{T}^{\star}{)}}{{\widehat{J}}_{\lambda}{(}\ux^r,{\uu^r},{T^r}{)}{-}{\cal{L}}_{\lambda}{(}{\ux}^{\star},{\uu}^{\star},{T}^{\ast},{\widehat{\uv}}^{\star}{)}} \notag\\
\implies {\rho}\,&{\text{ = }}\,\frac{J^r{-}J^s}{J^r{-}{\cal{L}}^s},\label{17} 
\end{align}
where $J^r, J^s, \cL^s $ are the shorthand notations used to denote the respective expressions. This ratio $\rho$ measures how well the linear approximation predicts the actual cost reduction.
If $\rho < 0$, then $J^r$ is less than $J^s$ i.e. the solution worsens the objective. Hence the solution is rejected and the trust region is reduced.  If $\rho$ is close to 1, then the linearized cost of the solution is close to the non-linearized cost of the solution. Thereby, accept the solution and possibly expand trust region.  Thus, the solution of current iteration is passed as a reference trajectory to the next iteration. 

The stopping criterion's main objective is to assess how different the new solution trajectory is from the reference trajectory at each iteration.  Hence, by taking a sufficiently small value $\epsilon$, we measure the gap between the reference cost function and the current cost. Therefore the terminating condition is verified by,
\begin{equation}
J^r - \cL^s \leq \epsilon. \label{eqn:stopping_criteria}
\end{equation}

This criterion ensures that the linear approximation accurately represents the nonlinear problem in the neighborhood of the solution, indicating that further iterations will yield negligible improvement. The complete successive convexification procedure for solving the communication-constrained trajectory optimization problem is briefed in Algorithm \ref{alg:uav_trajectory}.

\begin{algorithm}
\caption{Communication Constrained Optimal Trajectory} 
\label{alg:uav_trajectory}
\begin{algorithmic}[1]
\Require  Penaly weight $\lam > 0$, tolerance $\epsilon>0$, Maximum number of iterations "\text{iter\_max}", Number of temporal discretizations points $N$,  update parameters $0<\rho_0<\rho_1<\rho_2<1$, and $\alpha>0$.   
\State \textbf{Initialization:} A guess UAV trajectory $\ux^g$, control input $\uu^g$,  final time $T^g$, and trust radius $\Upsilon_0$
\State \quad Set $\ux^r = \ux^g$, $\uu^r = \uu^g$ and $T^r = T^g$ 
\While{$i \leq \text{iter\_max}$ \textbf{and} stopping criteria \eqref{eqn:stopping_criteria} not met} \label{algo:step_while}
    \State Solve the convex optimal control subproblem \eqref{eqn:discretized_convex_subproblem}:
    \Statex \quad \quad Obtain ${\uu}^\star$, $T^\star$ and  Update UAV trajectory $\ux^\star$
    \State Calculate $\rho$ using \eqref{17} \label{algo:step_cal_rho}
    \If{$\rho <\rho_0$} \label{algo:step_contract}
    \Statex \qquad Contract the trust region ($\Upsilon^k \leftarrow \Upsilon^k/\alpha$), and go back to Step-\ref{algo:step_while}
    \Else
    \Statex \qquad Update $\ux^r = \ux^\star$, $\uu^r = \uu^\star$, $T^r = T^\star$, and
    \Statex \begin{align*}
        \Upsilon^{k+1} = \begin{cases}
            \Upsilon^k/\alpha, ~ &\text{for}~ \rho<\rho_1\\
            \Upsilon^k, ~ &\text{for}~ \rho_1\leq\rho<\rho_2\\
            \alpha\Upsilon^k, ~ &\text{for}~ \rho\geq\rho_2
        \end{cases}
    \end{align*}
    \EndIf
    \State $i \gets i + 1$
\EndWhile
\State Return $\ux^\star$, $\uu^\star$, $T^\star$   
\end{algorithmic}
\end{algorithm}

\section{Convergence Analysis}
\label{sec:convergence}
In this section, we present the   convergence analysis of the proposed algorithm (Algorithm \ref{alg:uav_trajectory}) in discrete time setting. Therefore, first we reformulate the normalized non-convex OCP \eqref{eqn:min_energy_prob_normalized}. To do that, we treat  all discrete states ($\ux$), controls ($\uu$), and parameter ($T$) as independent variables. Hence, the reformulated non-convex discrete time problem is 

\begin{subequations}\label{eqn:J_dis}
\begin{align}
    \min_{\ux, \uu, T} ~& \widehat{J}(\ux, \uu, T) = \text{trapz}\br{\widehat{\Gamma}_\lam^N} \\
    \text{s.t.} ~ 
    & \x_{k+1} = \widehat{f}(\x_k, \u_k, T) , \quad \forall k\in\{ 1,\ldots, N \}\\ 
    & \u_k \in \mathcal{U}, \quad \forall k\in\{ 1,\ldots, N \} \\ 
    & \x_k \in   \mathcal{X}, \quad \forall k\in\{ 1,\ldots, N \} \\     
    & \x_1 = \x_{\text{start}}, ~ \x_N = \x_{\text{goal}} \\
    & s(\x_k,\u_k,T) \leq 0 , \quad \forall k\in\{ 1,\ldots, N \} 
\end{align}
\end{subequations}
where trapz(.) is numerical integration method defined in 
\eqref{20}, $\widehat{\Gamma}_\lam^N$ is the running cost 
with discrete state and control variables, and  $\widehat{f}$ is the discretized version of the nonlinear system dynamics \eqref{dynamics_normalized}. 
For simplicity, throughout the analysis, we denote our variable trio $\bsq{\ux^\top,\uu^\top,T}^\top$ as a single variable $\uz$ and at the $k$th iteration $\uz^k \coloneqq \bsq{{\ux^k}^\top,{\uu^k}^\top,T^k}^\top$. Following this notations, the system dynamics \eqref{dynamics_normalized} and the nonconvex path constraints \eqref{nonconvex_constraints_combined} are represented as a set of algebraic equations $g_i(\uz) = 0$ and $\overline{g}_j(\uz) \leq 0$ respectively with $i\in \In_{\text{eq}}$ (set of equality constraint indices) and $j\in \In_{\text{NC}}$ (set of nonconvex inequality constraint indices). Similarly, the convex state and control constraints are represented as $h_l(\uz) \leq 0$, where $l\in \In_{\text{C}}$ (set of convex inequality constraint indices). Now, the discrete time non-convex OCP \eqref{eqn:J_dis} is rewritten as, \\
\begin{subequations}\label{eqn:J_dis_Z}
\begin{align}
    \min_{\uz} ~& \widehat{J}(\uz) = \text{trapz}(\widehat{\Gamma}_\lam^N) \\
    \text{s.t.} ~ 
    & g_i(\uz) = 0, \quad i\in \In_{\text{eq}}\\ 
    &\overline{g}_j(\uz) \leq 0 \quad j\in \In_{\text{NC}}\\ 
    & h_l(\uz) \leq 0  \quad l\in \In_{\text{C}}
\end{align}
\end{subequations}
\\

Rewriting the cost function $\widehat{J}(\uz)$ and the constraints $g_i(\uz) = 0$ and $\overline{g}_j(\uz) \leq 0$ using the penalty function ($P(.)$) defined in \eqref{16}, we get the OCP, 
\\
\begin{align}
    \label{eqn:J_dis_pen_def}
    \widehat{J}_\lambda (\uz) \coloneqq \widehat{J}(\uz)  + \lambda \sum_i \norm{g_i(\uz)}_1 + \lambda \sum_j \norm{\overline{g}_j(\uz)}_1
\end{align}
\\
Therefore, the reformulated penalized OCP we will be considering throughout the convergence analysis is presented as 
\begin{subequations}\label{eqn:J_dis_Z_final}
\begin{align}
    \min_{\uz} ~& \widehat{J}_\lam(\uz)  \\
    \text{s.t.} ~ 
    & h_l(\uz) \leq 0 
\end{align}
\end{subequations}

\subsection{Exact problem formulation:}
The penalty function $P(.)$ can be  considered as "exact", if there exists a finite value of parameter $\lambda$ such that the OCP before penalizing \eqref{eqn:J_dis_Z}, and after penalizing \eqref{eqn:J_dis_Z_final}  are equivalent in the sense of optimality conditions. Hence, the solution of \eqref{eqn:J_dis_Z_final} would be the same as that of solving \eqref{eqn:J_dis_Z}. Before proceeding with the analysis, we introduce the following preliminary results that will set the foundation for the convergence proof.


\begin{lemma} \label{lem:stationary}
    \textbf{(Stationarity Conditions:)} If $\overline{\uz}$ is a local minima for the non-penalized OCP \eqref{eqn:J_dis_Z}, then there exist the positive Lagrange multipliers $\mu_i$, $\xi_j$, $\sigma_l$  such that, 
    \begin{align}
        \nabla \widehat{J}(\overline{\uz}) + \sum_i \mu_i \nabla g_i(\overline{\uz})+ \sum_j \xi_j \nabla \overline{g}_j\overline{\uz} + \sum_l \sigma_l \nabla h_l(\overline{\uz}) = 0
    \end{align}
\end{lemma}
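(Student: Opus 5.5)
The statement is the Karush--Kuhn--Tucker (KKT) first-order necessary condition applied to the finite-dimensional program \eqref{eqn:J_dis_Z}. I would prove it in the standard way: show that at a local minimizer no feasible first-order descent direction exists, then convert that geometric fact into the multiplier identity with a theorem of the alternative (Farkas' lemma).

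Two points must be fixed before the argument works as stated.

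First, the lemma is false without a constraint qualification, so I would add one explicitly as a standing assumption. I would use the linear independence constraint qualification (LICQ) at $\overline{\uz}$: the gradients $\nabla g_i(\overline{\uz})$, $i\in\In_{\text{eq}}$, together with the gradients of the \emph{active} inequalities are linearly independent. The active inequalities are those $\overline{g}_j$, $h_l$ with $\overline{g}_j(\overline{\uz})=0$ and $h_l(\overline{\uz})=0$. This assumption is reasonable here because the dynamics defects have full-rank Jacobians in $\x_{k+1}$, so the block structure of $\nabla g_i$ makes independence plausible.

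Second, "positive" should be read as follows. The multipliers $\xi_j,\sigma_l\ge 0$ vanish on inactive constraints, which is complementary slackness. The equality multipliers $\mu_i$ are free in sign. If $\mu_i\geq0$ is wanted literally, one can split each $g_i=0$ into $g_i\le 0$ and $-g_i\le 0$, at the cost of losing LICQ. For that reason I would instead keep $\mu_i$ free and state the sign convention clearly.

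The key steps, in order, are as follows.

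(i) Smoothness. All functions are $C^1$ near $\overline{\uz}$. The dynamics $\widehat f$ is smooth, being built from trigonometric functions and the state transition matrix. The throughput constraint is smooth away from $\Delta\r=0$, which requires the UAV never to sit exactly on the GS. The obstacle norm $\norm{\cdot}$ is differentiable away from the obstacle centre, which holds at feasible points since $d_{\text{s},j}>0$. The cost $\widehat J$ used here is the unpenalized trapezoidal energy, which is smooth.

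(ii) Tangent cone equals linearized cone. Let $\mathcal{A}$ be the active set and
\[
L(\overline{\uz})=\{ d:\ \nabla g_i^\top d=0,\ \nabla \overline{g}_j^\top d\le 0,\ \nabla h_l^\top d\le 0\ (j,l\in\mathcal{A})\}.
\]
I would show that every $d\in L(\overline{\uz})$ is the tangent of a feasible curve $\uz(t)=\overline{\uz}+td+o(t)$. The curve is built with the implicit function theorem, using LICQ to solve for a correction lying in the span of the active gradients.

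(iii) No descent direction. Local optimality of $\overline{\uz}$ along such curves gives $\nabla\widehat J(\overline{\uz})^\top d\ge 0$ for all $d\in L(\overline{\uz})$.

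(iv) Multipliers. Farkas' lemma applied to this polyhedral cone yields multipliers $\mu_i\in\Rn$ and $\xi_j,\sigma_l\ge0$ supported on the active set such that $-\nabla\widehat J(\overline{\uz})$ is the corresponding combination of the constraint gradients. Setting the multipliers of inactive constraints to zero then gives the displayed identity.

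The main obstacle is step (ii), the curve construction via the implicit function theorem, together with justifying LICQ for this specific problem. For the latter, I would argue generically from the dynamics structure, or else simply impose it as a hypothesis. Alternatively, since the convex constraints $h_l$ are affine bounds, one could use the Abadie constraint qualification for those together with LICQ for the nonlinear ones. Either way, once the constraint qualification is in place, the remainder is the textbook KKT derivation, and I would cite it rather than redo it.
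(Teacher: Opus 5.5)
The paper gives no proof of this lemma. It is the Karush--Kuhn--Tucker theorem quoted without its hypotheses, so your route (a constraint qualification, tangent cone equal to linearized cone, then Farkas) is the standard derivation the paper implicitly relies on. It is correct for the repaired statement.

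Your repairs are genuinely needed. As printed, with no constraint qualification and strictly positive multipliers on every constraint (equalities and inactive inequalities included), the lemma is false. Your reading of $\widehat J$ as the unpenalized energy is also necessary. With the paper's literal definition via $\widehat\Gamma_\lam^N$, the cost contains the $\ell_1$ and $[\cdot]_+$ penalty terms, and $\nabla\widehat J(\overline{\uz})$ does not exist on the feasible set.

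Three details need fixing.

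\emph{Smoothness in step (i).} The throughput constraint is not $C^1$ merely away from $\Delta\r=0$. Whenever a node lies directly above the ground station, $\sin^{-1}$ in the elevation angle is evaluated at $1$. There $\alpha\approx 90-\frac{180}{\pi}\,\rho/(z-z_{\text{GS}})$, with $\rho$ the horizontal distance to the station, which has a conical kink. Since $\mathbb{P}_{\text{dir}}$ is strictly increasing in $\alpha$ and $\zeta<1$, the rate and hence $\mathcal{Q}$ inherit the kink. You must exclude this configuration by hypothesis or use a nonsmooth multiplier rule. A throughput-seeking trajectory may well place nodes there, since the rate has a kinked maximum at such points.

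\emph{Your fallback constraint qualification.} ``Abadie for the affine $h_l$ together with LICQ for the nonlinear constraints'' is not valid, because qualifications imposed separately on the two families do not combine. Take $\min z_1$ subject to $z_2-z_1^2=0$ and $z_2\le 0$. The nonlinear equality has nonzero gradient $(0,1)$ and the affine constraint is harmless on its own. Yet the feasible set is $\{0\}$, and $(1,0)+\mu(0,1)+\sigma(0,1)\neq 0$ for all multipliers. You need a joint condition on all active constraints. Your main-line LICQ works, as does MFCQ with strict decrease for every active inequality, affine ones included, since the dynamics equalities are nonlinear.

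\emph{Justifying LICQ.} LICQ can only hold if $\x_1=\x_{\text{start}}$ and $\x_N=\x_{\text{goal}}$ are kept as equalities rather than encoded as paired opposite inequalities among the $h_l$. Independence of the terminal condition from the dynamics gradients is a controllability property of the linearized discrete dynamics, using only controls not pinned at active bounds. The identity block in $\x_{k+1}$ alone does not give it.
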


Lemma \ref{lem:stationary} gives us the first order optimality condition for the OCP \eqref{eqn:J_dis_Z}. Now, we will examine the first order optimality condition for OCP \eqref{eqn:J_dis_Z_final}. The presence of $\norm{.}_1$ (1-norm) makes the penalized cost function $\widehat{J}_\lam$ non-smooth, hence not differentiable everywhere. However, it can be realized that the constraints $g_i(\uz)$ and $\overline{g}_j(\uz)$ are continuously differentiable. Hence, $\widehat{J}_\lam$ 
 is concluded to be Lipschitz continuous~\cite{Clarke1990}. Thereby, the generalized directional derivative (GDD) of $\widehat{J}_\lam$  calculated at $\overline{\uz}$ in any particular direction $\s$ is defined as. 
 \begin{align} \label{eq:gdd}
     \text{d}\widehat{J}_\lam (\overline{\uz}, \s) &\coloneqq  \limsup\limits_{\mathclap{\substack{\uz\rightarrow \overline{\uz} \\ \Delta\rightarrow 0^+}}} \frac{\widehat{J}_\lam(\uz+\Delta \s)-\widehat{J}_\lam(\uz)}{\Delta} \\ 
     & = \max \{ \mathfrak{v}^\top \s | \mathfrak{v} \in \partial \widehat{J}_\lam(\overline{\uz}) \}
 \end{align}
where $\partial\widehat{J}_\lam(\overline{\uz})$ is the subgradient. 
\begin{lemma} \label{lem:necessary}
    \textbf{(Necessary condition:)} If $\overline{\uz}$ is a local minima of OCP \eqref{eqn:J_dis_Z_final}, then $0 \in \Gn(\uz)$, where $\Gn(\uz) \coloneqq\{ \uz| 0 \in \partial{\widehat{J}_\lam(\uz)} + \uy\,|\,\uy=\sum_{l\in  \In_{\text{C}}}\sigma_l \nabla h_l({\uz}) \}$ defines the set of all stationary points of OCP \eqref{eqn:J_dis_Z_final} 
\end{lemma}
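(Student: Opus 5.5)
We plan to read Lemma~\ref{lem:necessary} as the standard Clarke first-order necessary condition for a locally Lipschitz objective minimized over a convex feasible set. The statement should be understood as: there exist multipliers $\sigma_l \ge 0$, with $\sigma_l h_l(\overline{\uz}) = 0$, such that $0 \in \partial \widehat{J}_\lam(\overline{\uz}) + \sum_{l} \sigma_l \nabla h_l(\overline{\uz})$, i.e.\ $\overline{\uz} \in \Gn$.

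\textbf{Step 1: Lipschitz property and the GDD.} First, $\widehat{J}_\lam$ is locally Lipschitz. The trapezoidal cost $\widehat{J}$ is smooth, and each term $\norm{g_i}_1$, $\norm{\overline{g}_j}_1$ is a composition of the Lipschitz function $\norm{\cdot}_1$ with $C^1$ maps. Hence the generalized directional derivative \eqref{eq:gdd} is finite, sublinear in $\s$, and equals the support function of the nonempty compact convex set $\partial \widehat{J}_\lam(\overline{\uz})$, exactly as written after \eqref{eq:gdd}.

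\textbf{Step 2: a primal necessary condition.} Let $\mathcal{K} := \{\uz : h_l(\uz)\le 0\}$; this set is convex since each $h_l$ is convex. Take any $\uz' \in \mathcal{K}$ and set $\s = \uz' - \overline{\uz}$. By convexity, $\overline{\uz} + \Delta \s \in \mathcal{K}$ for $\Delta \in (0,1]$, so local optimality gives $\widehat{J}_\lam(\overline{\uz}+\Delta\s) - \widehat{J}_\lam(\overline{\uz}) \ge 0$ for small $\Delta$. Taking the $\limsup$, which dominates the quotient at the fixed base point $\overline{\uz}$, yields $\text{d}\widehat{J}_\lam(\overline{\uz},\s) \ge 0$. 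By continuity and positive homogeneity, this extends to every $\s$ in the tangent cone $T_{\mathcal{K}}(\overline{\uz})$.

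\textbf{Step 3: dualization.} This is the step I expect to be the main obstacle, since it needs a constraint qualification that the paper does not state. The plan is to assume Slater's condition for the convex constraints, i.e.\ some $\hat{\uz}$ with $h_l(\hat{\uz})<0$ for all $l$. This is reasonable for box-type actuator and attitude limits. Under Slater, the tangent cone is the linearized cone $\{\s : \nabla h_l(\overline{\uz})^\top \s \le 0,\ l \text{ active}\}$, and its polar is $N_{\mathcal{K}}(\overline{\uz}) = \{\sum_{l\,\text{active}} \sigma_l \nabla h_l(\overline{\uz}) : \sigma_l \ge 0\}$.

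Step 2 then says $\max_{\mathfrak{v}\in\partial\widehat{J}_\lam(\overline{\uz})} \mathfrak{v}^\top\s \ge 0$ for all $\s$ in this cone. Suppose, for contradiction, that $0 \notin \partial \widehat{J}_\lam(\overline{\uz}) + N_{\mathcal{K}}(\overline{\uz})$. This set is closed and convex: it is a compact set plus a finitely generated cone. Strict separation therefore gives an $\s$ with $\mathfrak{v}^\top \s < 0$ for all $\mathfrak{v} \in \partial\widehat{J}_\lam(\overline{\uz})$ and $\b{n}^\top\s \le 0$ for all $\b{n}\in N_{\mathcal{K}}$. Boundedness over the cone forces $\b{n}^\top\s \le 0$, so $\s$ lies in the bipolar $T_{\mathcal{K}}(\overline{\uz})$. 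This contradicts Step 2.

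Hence $0 \in \partial\widehat{J}_\lam(\overline{\uz}) + \sum_l \sigma_l \nabla h_l(\overline{\uz})$. Setting $\sigma_l = 0$ for inactive constraints gives complementary slackness, so $\overline{\uz} \in \Gn$. Equivalently, one may cite Clarke's Lagrange multiplier rule for Lipschitz problems~\cite{Clarke1990} directly, with Steps 1–3 serving as its self-contained verification.
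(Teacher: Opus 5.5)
The paper gives no proof of this lemma. It is stated without argument, implicitly as Clarke's first-order condition for a Lipschitz objective over a convex set. The only related reasoning in the paper is the separation step inside the proof of Theorem~\ref{thm:contraction}.

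\textbf{What your argument does.} Your proof is correct under its stated hypothesis. It is essentially that separation step run in the contrapositive direction. In the paper, non-stationarity yields a unit $\s$ with $\langle \nabla h_l(\overline{\uz}),\s\rangle \le 0$ for active $l$ and $\text{d}\widehat{J}_\lam(\overline{\uz},\s)\le -\kappa$. You show, via Step~2, that local minimality forbids any such $\s$ in $T_{\mathcal{K}}(\overline{\uz})$. What your version buys over a bare citation is that it exposes a hypothesis both arguments need but the paper never states: directions satisfying the linearized active constraints must be genuine tangent directions. You are right that this is not automatic. For $\min z$ subject to $z^2\le 0$, the minimizer $z=0$ has generalized gradient $\{1\}$ while $\nabla h(0)=0$. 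So the lemma as literally stated fails for general smooth convex $h_l$.

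\textbf{The qualification to change.} The one thing to change is your choice of qualification, because Slater fails exactly where the lemma is used. In this problem every $h_l$ is affine: the bounds on $u_1,\dots,u_4$, $\phi_k$ and $\theta_k$, and also the boundary conditions $\x_1=\x_{\text{start}}$, $\x_N=\x_{\text{goal}}$. The convex subproblem \eqref{eqn:discretized_convex_subproblem} imposes these boundary conditions as hard constraints. They must lie in the unpenalized feasible set for $\Delta\uz=0$ to be feasible in the proof of Theorem~\ref{thm:stationarity_finite}. Written as pairs of opposite inequalities, they make $h_l(\hat{\uz})<0$ for all $l$ impossible.

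\textbf{The fix.} Use polyhedrality instead. Suppose $\nabla h_l(\overline{\uz})^\top \s\le 0$ for all active $l$. Then for small $\Delta>0$, the point $\overline{\uz}+\Delta\s$ keeps the active affine constraints satisfied and the inactive ones strict. Hence the linearized cone coincides with the cone of feasible directions and with $T_{\mathcal{K}}(\overline{\uz})$, with no qualification at all. Equivalently, you may invoke the refined Slater condition, which exempts affine constraints. With that substitution, Step~3 holds unconditionally for \eqref{eqn:J_dis_Z_final}, and the closure argument at the end of Step~2 is no longer needed.
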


In the other words,  Lemma \ref{lem:necessary} implies that if $\overline{\uz}$ solves OCP \eqref{eqn:J_dis_Z_final}, then $\overline{\uz} \in \Gn$.

We also define a set of all feasible 
\begin{lemma}
    \textbf{(Exactness:)} \label{lem:exactness} \cite{Clarke1990} Considering $\overline{\uz}$ to be a stationary point of OCP \eqref{eqn:J_dis_Z} with $\mu_i, \xi_j, \sigma_l$ as the positive Lagrange multipliers, and if the penalty function weight $\lambda$ satisfies $\lambda > |\mu_i| \cup |\xi_j| \cup  |\sigma_l|$ for all $i \in \In_{\text{eq}}, j \in \In_{\text{NC}}, l \in \In_{\text{C}}$, then $\overline{\uz}$ is also a constrained stationary point for OCP \eqref{eqn:J_dis_Z_final}, and this implies $\overline{\uz} \in \Gn(\uz)$.  Also, we can say that, if $\overline{\uz} \in \Gn(\uz)$ is a feasible point for OCP \eqref{eqn:J_dis_Z}, then it is a stationary point for OCP \eqref{eqn:J_dis_Z}. 
\end{lemma}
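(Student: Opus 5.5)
We prove the two directions separately. Both rest on the structure of the Clarke subdifferential of the $\ell_1$-penalized cost \eqref{eqn:J_dis_pen_def} and on the generalized directional derivative \eqref{eq:gdd}. Since $\widehat{J}$, $g_i$ and $\overline{g}_j$ are continuously differentiable, the Clarke sum and chain rules give, at any $\uz$,
\begin{align*}
\partial \widehat{J}_\lam(\uz) = \nabla \widehat{J}(\uz) + \lam\sum_i \partial|g_i(\uz)| + \lam \sum_j \partial\,|\overline{g}_j(\uz)| .
\end{align*}
Here $\partial|g_i(\uz)| = \{\operatorname{sign}(g_i(\uz))\nabla g_i(\uz)\}$ when $g_i(\uz)\neq 0$, and $\partial|g_i(\uz)| = \{ t\nabla g_i(\uz) : t\in[-1,1]\}$ when $g_i(\uz)=0$. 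The analogous description holds for $\overline{g}_j$, where for an inequality the relevant penalty is really $[\overline{g}_j]_+$, with subdifferential $\{t\nabla \overline{g}_j : t\in[0,1]\}$ on the active set. The sum rule holds with equality here because each summand is smooth composed with a convex, hence regular, function.

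\emph{First direction.} Let $\overline{\uz}$ be a stationary point of \eqref{eqn:J_dis_Z} with multipliers from Lemma~\ref{lem:stationary}. Feasibility gives $g_i(\overline{\uz})=0$ for all $i$, and $\overline{g}_j(\overline{\uz})\le 0$, with $\xi_j=0$ whenever the constraint is inactive by complementary slackness. Set $t_i=\mu_i/\lam$ and $t_j=\xi_j/\lam$. The hypothesis $\lam>\max\{|\mu_i|,|\xi_j|,|\sigma_l|\}$ gives $t_i\in(-1,1)$ and $t_j\in[0,1)$. Hence $\mathfrak{v}:=\nabla\widehat{J}(\overline{\uz})+\sum_i\mu_i\nabla g_i(\overline{\uz})+\sum_j\xi_j\nabla\overline{g}_j(\overline{\uz})$ lies in $\partial\widehat{J}_\lam(\overline{\uz})$ by the formula above. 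The stationarity identity of Lemma~\ref{lem:stationary} then reads $0=\mathfrak{v}+\uy$ with $\uy=\sum_l\sigma_l\nabla h_l(\overline{\uz})$. So $0\in\partial\widehat{J}_\lam(\overline{\uz})+\uy$, i.e.\ $\overline{\uz}\in\Gn$ in the sense of Lemma~\ref{lem:necessary}. The convex constraints keep the same multipliers $\sigma_l$. The bound on $|\sigma_l|$ is not actually needed for this step.

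\emph{Converse.} Let $\overline{\uz}\in\Gn$ be feasible for \eqref{eqn:J_dis_Z}. By definition there exist $\mathfrak{v}\in\partial\widehat{J}_\lam(\overline{\uz})$ and $\sigma_l\ge0$ with $\mathfrak{v}+\sum_l\sigma_l\nabla h_l(\overline{\uz})=0$. Feasibility means every $g_i$ is zero and every $\overline{g}_j$ is nonpositive. So $\mathfrak{v}$ must have the form $\nabla\widehat{J}+\lam\sum_i t_i\nabla g_i+\lam\sum_j t_j\nabla\overline{g}_j$, with $t_i\in[-1,1]$, $t_j\in[0,1]$, and $t_j=0$ for inactive $j$. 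Defining $\mu_i=\lam t_i$ and $\xi_j=\lam t_j\ge0$ recovers exactly the stationarity identity of Lemma~\ref{lem:stationary}, together with complementary slackness. Thus $\overline{\uz}$ is a KKT (stationary) point of \eqref{eqn:J_dis_Z}.

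\emph{Main obstacle.} The delicate step is justifying the subdifferential formula with equality rather than mere inclusion. For the converse we need every element of $\partial\widehat{J}_\lam(\overline{\uz})$ to decompose in this form, whereas generically Clarke calculus only gives $\subseteq$. Fortunately the inclusion $\subseteq$ is precisely what the converse uses, and the first direction needs only membership, for which I would exhibit directly that $\mathfrak{v}$ attains the $\limsup$ in \eqref{eq:gdd} along each direction $\s$. I would therefore rely on the inclusion for the converse and on regularity of $|\cdot|$ composed with a $C^1$ map for the forward direction.

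A second subtlety is the mismatch between the paper's notation and what the argument needs. The penalty is written as $\norm{\overline{g}_j}_1$, but the argument requires $[\overline{g}_j]_+$, consistent with the $[\cdot]_+$ used in $\widehat{J}_\lam$ earlier. Likewise, ``positive'' multipliers $\mu_i$ should read ``real'' for the equality constraints. I would state both adjustments explicitly at the start of the proof.
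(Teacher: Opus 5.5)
The paper gives no proof of this lemma; it only attributes the result to Clarke's nonsmooth-analysis text. Your proposal is correct and supplies the missing argument. You compute $\partial\widehat{J}_\lam(\overline{\uz})$ through Clarke's sum and chain rules. These hold with equality, since $|\cdot|$ and $[\cdot]_+$ are convex, hence regular, and are composed with $C^1$ maps. For the forward direction you set $t_i=\mu_i/\lam$ and $t_j=\xi_j/\lam$. For the converse you use only the inclusion $\subseteq$ to recover $\mu_i=\lam t_i$ and $\xi_j=\lam t_j\ge 0$ with complementary slackness. This is presumably the standard exact-penalty argument behind the citation. The citation buys brevity, while your explicit version shows where the statement as written needs repair.

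Most importantly, with the paper's literal penalty $\lam\sum_j\norm{\overline{g}_j(\uz)}_1$ the forward claim is false. For $\min_{z\in\Rn} z^2$ subject to $z-1\le 0$, the KKT point $z=0$ has $\xi=0$, yet $z^2+\lam|z-1|$ has derivative $-\lam\neq 0$ there. Your switch to $[\overline{g}_j]_+$, which matches the $[\cdot]_+$ used in $\widehat{\Gamma}_{\lam,k}^N$, is therefore necessary rather than cosmetic. Your other remarks are also right: the $\mu_i$ are merely real, and no bound on $\sigma_l$ is needed because the $h_l$ stay as hard constraints in \eqref{eqn:J_dis_Z_final}. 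In fact the non-strict bound $\lam\ge\max\{|\mu_i|,\xi_j\}$ already suffices for this stationarity statement, since $t_i=\pm1$ and $t_j=1$ remain admissible.

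One tidy-up: in the forward direction you need not show that $\mathfrak{v}$ ``attains the $\limsup$''. Membership follows from the regularity you already invoked. It also follows directly from $\mathfrak{v}^\top\s\le\widehat{J}_\lam'(\overline{\uz};\s)\le \mathrm{d}\widehat{J}_\lam(\overline{\uz},\s)$ for all $\s$, which holds because $|t_i|\le 1$ and $t_j\in[0,1]$.
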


    

\begin{rem}
     Lemma \ref{lem:exactness} does not directly giveaway an implementable value of $\lambda$, it still has important theoretical importance. During implementations, we
select a relatively large $\lambda$ and keep it fixed for the whole process.  From a practical point of view, an approximate range of $10^2$ to $10^5$ would work in most of the practical problems. If the problem is feasible, yet the algorithm is not converging the user may retry by increasing $\lam$. Also, very high value of $\lam$ may lead higher number of iteration  to converge.
\end{rem}

\begin{rem}
     It can be noted that Lemma \ref{lem:exactness} guarantees that as long as we can find a stationary point for the penalty OCP \eqref{eqn:J_dis_Z_final} which is feasible to the original OCP \eqref{eqn:J_dis_Z}, we reach stationarity of the original OCP \eqref{eqn:J_dis_Z} as well.
\end{rem}

\subsection{Convergence of Algorithm-\ref{alg:uav_trajectory}}
For simplicty in notations, we have already redefined the optimization variable trio at $k$th iteration  $(\x^k, \u^k, T)$ as a single variable $\uz^k$ at the start of the Section-\ref{sec:convergence}.
Omitting the dependencies the virtual control the linearized penalized cost function at $k$th iteration as $\cL_\lam^k$ can be rewritten as, 
 \begin{align}
 {\cal{L}}_{\lambda}^k{(}{\delta \ux^k},{\delta \uu^k},{\delta T}{)}  &=  {\text{ trapz}}{(}{{\Gamma}(\ux^k+\delta \ux^k, \uu^k+\delta \uu^k, T+\delta T )} {)} \nonumber\\ &\qquad \quad+ \sum_{i=1}^{N-1} \lambda_i P(\widehat{\uv}) \nonumber \\ 
 \implies  \cL_\lam^k (\Delta \uz^k) &= \text{trapz} \left( {\Gamma}(\uz^k+ \Delta \uz^k)\right)  + \sum_{i=1}^{N-1} \lam_i P(\widehat{\uv})
 \end{align}
where $\text{trapz}(\boldsymbol{\Gamma})$ applies the trapezoidal integration rule \eqref{20} to the discretized running cost $\widehat{\Gamma}_\lambda^N$ at all temporal nodes, and $\widehat{\mathbf{V}} = \{\boldsymbol{\nu}_k, \nu_{s,k}\}_{k=1}^N$ denotes the collection of all virtual control variables.

Thereby, at the end of $k$th iteration,  the actual change in the penalized cost function ($ \Delta \widehat{J}_\lam^k$) of OCP \eqref{eqn:J_dis_Z_final}  and the predicted change after linearization ($\Delta\cL_\lam^k $) are redefined by 

\begin{align*}
    \Delta \widehat{J}_\lam^k & \coloneqq  \widehat{J}_\lam(\uz^k) - \widehat{J}_\lam(\uz^k+ \Delta \uz^k) \\ 
    \Delta \cL_\lam^k &\coloneqq  \widehat{J}_\lam(\uz^k) -  \cL_\lam^k (\Delta \uz^k)
\end{align*}
\begin{theorem} \label{thm:stationarity_finite}
    The predicted cost reductions (step \ref{algo:step_cal_rho} of Algorithm \ref{alg:uav_trajectory}) $\Delta \cL_\lam^k \geq 0$ for all $k$ where $\Delta \cL_\lam^k = 0$ implies that $ \uz^{k} \in \Gn(\uz) $. Hence, $\uz^{k}$ is a stationary point of the OCP \eqref{eqn:J_dis_Z_final}. 
\end{theorem}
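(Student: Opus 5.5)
The argument follows the standard SCP convergence pattern (Mao et al., \cite{MAO20174063}): compare the convex subproblem \eqref{eqn:discretized_convex_subproblem} at iteration $k$ with the feasible choice $\Delta\uz^k=0$, and then read off optimality conditions at that zero step.

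\emph{Nonnegativity of $\Delta\cL_\lam^k$.} The step $\Delta\uz = 0$ is always feasible for the convexified subproblem. The trust-region constraint \eqref{13i} holds trivially. The convex constraints $h_l(\uz^k)\le 0$ hold because $\uz^k$ was produced by a previous subproblem, or is the initial guess, which we assume satisfies $\mathcal{U},\mathcal{X}$ and the boundary conditions. For the linearized dynamics and nonconvex constraints, at $\Delta\uz=0$ the virtual controls that make \eqref{13b} and \eqref{13f} hold are exactly $\nu_k=\delta_k$ and $\nu_{s,k}=[s(\x_k,\u_k,T)]_+$. This is because the linearization is exact at the reference point: the residues $\varrho,\widehat\varrho$ are defined so that $A\x^r+B\u^r+FT^r+\varrho=\overline f(\x^r,\u^r,T^r)$. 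Substituting into the penalty \eqref{eqn:penalty_func} gives $\cL_\lam^k(0)=\widehat{J}_\lam(\uz^k)$, which matches the definition of $\widehat{\Gamma}^N_{\lam,k}$. Since $\Delta\uz^k$ minimizes $\cL_\lam^k$ over a feasible set containing $0$, we get $\cL_\lam^k(\Delta\uz^k)\le \cL_\lam^k(0)=\widehat{J}_\lam(\uz^k)$. Hence $\Delta\cL_\lam^k\ge 0$.

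\emph{$\Delta\cL_\lam^k=0$ implies stationarity.} If $\Delta\cL_\lam^k=0$, then $0$ is itself a minimizer of the convex problem $\min \cL_\lam^k(\Delta\uz)$ subject to $h_l(\uz^k+\Delta\uz)\le 0$ and the trust-region bound. Because $\Upsilon>0$, the trust region is inactive at $0$. Convex optimality (a Slater-type condition for the convex set $\mathcal{U}\times\mathcal{X}$, or its polyhedral structure) then gives multipliers $\sigma_l\ge 0$ with
\[
0\in \partial \cL_\lam^k(0) + \textstyle\sum_l \sigma_l\nabla h_l(\uz^k).
\]
It remains to show $\partial\cL_\lam^k(0)=\partial\widehat{J}_\lam(\uz^k)$. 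The function $\cL^k_\lam$ is built from the smooth cost $\text{trapz}(\Gamma)$ plus $\lambda\|\cdot\|_1$ composed with the affine maps $g_i(\uz^k)+\nabla g_i(\uz^k)^\top\Delta\uz$ and $\overline g_j(\uz^k)+\nabla\overline g_j(\uz^k)^\top\Delta\uz$. The function $\widehat J_\lam$ is built from the same pieces with $g_i,\overline g_j$ themselves, which are $C^1$. By the chain rule for Clarke subdifferentials of $\|\cdot\|_1\circ$(a $C^1$ map) \cite{Clarke1990}, both subdifferentials at the base point equal
\[
\nabla\widehat J(\uz^k)+\lambda\textstyle\sum_i \nabla g_i(\uz^k)\,\partial|\cdot|(g_i(\uz^k))+\lambda\sum_j\nabla\overline g_j(\uz^k)\,\partial|\cdot|(\overline g_j(\uz^k)),
\]
since the inner maps share values and Jacobians at $\Delta\uz=0$. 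Equivalently, the generalized directional derivatives \eqref{eq:gdd} coincide in every direction $\s$. Thus $0\in\partial\widehat J_\lam(\uz^k)+\uy$ with $\uy=\sum_l\sigma_l\nabla h_l(\uz^k)$, i.e.\ $\uz^k\in\Gn$. By Lemma~\ref{lem:necessary}'s characterization, $\uz^k$ is a stationary point of \eqref{eqn:J_dis_Z_final}.

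\emph{Main obstacle.} The delicate step is the subdifferential identification. The paper's $P$ acts on virtual controls rather than directly on the constraint functions, so I first rewrite the subproblem by eliminating $\nu,\nu_s$. At optimality $\nu_k$ equals the linearized defect, and $\nu_{s,k}$ equals the positive part of the linearized constraint. This expresses $\cL^k_\lam$ as the composite form above. For inequalities, $[\cdot]_+$ replaces $|\cdot|$, and I handle this with the same chain rule. Another technical point is that the ZOH discretization makes $A_k,B_k$ exactly the Jacobians of $\widehat f$, which differentiability of the flow guarantees. I would also state explicitly the constraint qualification needed for the convex KKT multipliers $\sigma_l$ to exist.
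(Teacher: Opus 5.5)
Your proposal is correct and follows the same route as the paper: the zero step is feasible, so $\cL_\lam^k(\Delta\uz^\star)\le\cL_\lam^k(0)=\widehat{J}_\lam(\uz^k)$ and $\Delta\cL_\lam^k\ge 0$, and when equality holds stationarity of $\uz^k$ follows. The one difference is at the last step, where the paper just says ``directly apply Lemma~\ref{lem:necessary}''; you instead supply what that step actually needs, namely eliminating the virtual controls into the $\ell_1$ and $[\cdot]_+$ constraint penalties and showing $\partial\cL_\lam^k(0)=\partial\widehat{J}_\lam(\uz^k)$ via Clarke's chain rule for regular functions.
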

\begin{IEEEproof}
    Denote $\Delta \uz^\star$ as the solution to the convex subproblem \eqref{eqn:discretized_convex_subproblem}, and note that $\Delta \uz=0$ is always a feasible solution to said problem. Hence we have
	\begin{equation*}
	\cL_\lam^k(\Delta \uz^\star) \leq \cL_\lam^k(0) = \widehat{J}_\lam(\uz^k).
	\end{equation*}
	Therefore, $\Delta \cL_\lam^k=\widehat{J}_\lam(\uz^k)-\cL_\lam^k(\Delta \uz^\star)\geq 0 $, and $ \Delta \cL_\lam^k=0 $ holds if and only if $ \Delta \uz^\star=0 $ is the minimizer of \eqref{eqn:discretized_convex_subproblem}. When $ \Delta \cL_\lam^k=0 $, one can directly apply Lemma \ref{lem:necessary} to get $ \uz^k \in \Gn(\uz) $.
\end{IEEEproof}
\begin{theorem} \label{thm:contraction}
    Let $\overline{\uz}$ be a feasible point that is not a stationary point for OCP \eqref{eqn:J_dis_Z_final}  (i.e. $\overline{\uz} \notin \Gn(\uz)$). For any positive constant  ${\mathfrak{k}} \in (0,1),$ there exist strictly positive values $\overline{\Upsilon}$ and $\overline{\varepsilon}$ such that for all $\uz$ within the neighborhood $\cN(\overline{\uz}, \overline{\varepsilon})$, and all $\Upsilon \in \left(0, \overline{\Upsilon} \right]$, optimal solution $\Delta{\uz}^\star$ of OCP \eqref{eqn:J_dis_Z_final} when solved at $\uz$ with trust region radius $\Upsilon$ satisfies, 
\begin{equation}
	\rho(\uz,\Upsilon) \coloneqq\frac{\widehat{J}_\lam(\uz)-\widehat{J}_\lam(\uz+\Delta{\uz}^\star)}{\widehat{J}_\lam(\uz)-\cL_\lam(\Delta{\uz}^\star)} \geq \mathfrak{k}.
	\end{equation}
\\
\end{theorem}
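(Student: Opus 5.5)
We follow the standard trust-region argument for SCP with exact penalties (as in Mao et al.): show that near a non-stationary point the linearization error is of second order in the step, while the predicted decrease is of first order in the trust radius. Write
\[
1-\rho(\uz,\Upsilon)=\frac{\widehat{J}_\lam(\uz+\Delta\uz^\star)-\cL_\lam(\Delta\uz^\star)}{\widehat{J}_\lam(\uz)-\cL_\lam(\Delta\uz^\star)} .
\]
It therefore suffices to bound the numerator by $c_1\Upsilon^2$ and the denominator from below by $c_2\Upsilon$, uniformly for $\uz\in\cN(\overline{\uz},\overline{\varepsilon})$. Then $1-\rho\le (c_1/c_2)\Upsilon$, and choosing $\overline{\Upsilon}\le (1-\mathfrak{k})c_2/c_1$ gives $\rho\ge\mathfrak{k}$.

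\textbf{Numerator.} The penalized cost has the composite form $\widehat{J}_\lam=\Gamma$-part $+\lambda\|\cdot\|_1\circ(g,\overline{g})$. Its convex model $\cL_\lam$ uses the same outer Lipschitz function, with $g_i,\overline{g}_j$ replaced by their first-order Taylor expansions. Because $g_i,\overline{g}_j$ and the running cost are $C^2$ (smooth dynamics and constraints), their Hessians are bounded on a compact neighborhood of $\overline{\uz}$. Taylor's theorem then gives a remainder of order $M\|\Delta\uz\|^2$. Since $\|\cdot\|_1$ and $[\cdot]_+$ are Lipschitz with constant $1$, we obtain $|\widehat{J}_\lam(\uz+\Delta\uz)-\cL_\lam(\Delta\uz)|\le c_1\|\Delta\uz\|^2\le c_1\Upsilon^2$, using the trust-region constraint \eqref{13i}.

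\textbf{Denominator.} This is the main obstacle. Since $\overline{\uz}\notin\Gn(\uz)$, Lemma~\ref{lem:necessary} and the max-formula for the GDD \eqref{eq:gdd} imply that $0$ does not lie in $\partial\widehat{J}_\lam(\overline{\uz})$ plus the normal cone of the convex set $\{h_l\le0\}$. By a separation argument we obtain a feasible direction $\s$ with $\|\s\|=1$ and $\text{d}\widehat{J}_\lam(\overline{\uz},\s)\le -2\eta<0$. Apply the same reasoning to the convex model: $\cL_\lam$ at $\uz$ has directional derivative at $0$ equal to $\max\{\mathfrak{v}^\top\s\}$ over the subdifferential of the linearized penalty. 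By upper semicontinuity of the Clarke subdifferential and continuity of the Jacobians, this stays $\le-\eta$ for $\uz$ in a small neighborhood $\cN(\overline{\uz},\overline{\varepsilon})$.

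Feasibility of the direction for the convex constraints near $\overline{\uz}$ must also be handled. One option is to pick $\s$ pointing into a Slater-type interior of the convex constraint set $\mathcal{U},\mathcal{X}$ and shrink $\overline{\varepsilon}$. Another is to use $\s(\uz)=$ projection-based directions depending continuously on $\uz$.

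Next we bound the decrease along that direction. By convexity of $\cL_\lam$ in $\Delta\uz$ and optimality of $\Delta\uz^\star$ within the trust region,
\[
\widehat{J}_\lam(\uz)-\cL_\lam(\Delta\uz^\star)\ \ge\ \cL_\lam(0)-\cL_\lam(t\s),\qquad t=\min(\Upsilon,t_0).
\]
Here $\cL_\lam(0)=\widehat{J}_\lam(\uz)$ holds because the virtual controls at $\Delta\uz=0$ reproduce the exact defects (as used in Theorem~\ref{thm:stationarity_finite}). A first-order expansion of the convex model gives $\cL_\lam(0)-\cL_\lam(t\s)\ge \eta t-c_3t^2$. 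Hence, for $\Upsilon\le\overline{\Upsilon}\le\min(t_0,\eta/(2c_3))$, the denominator is at least $c_2\Upsilon$ with $c_2=\eta/2$. Combining the two bounds with the choice of $\overline{\Upsilon}$ above completes the plan.

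The delicate points are the uniformity of $\eta$ over the neighborhood, which needs the semicontinuity and compactness of the subdifferentials, and ensuring that the trust-region norm in \eqref{13i} is equivalent to $\|\Delta\uz\|$. The latter is harmless in finite dimensions.
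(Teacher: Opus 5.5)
Your overall architecture is sound: write $1-\rho$ as the linearization error over the predicted decrease, bound the former by $c_1\Upsilon^2$, and bound the latter below by $c_2\Upsilon$. Your numerator estimate, the choice of $\overline{\Upsilon}$ and the norm-equivalence remark are fine, and this is a quantitative ($C^2$) version of the paper's $o(\Upsilon)$ argument.

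The difference, and the gap, is in where the linear lower bound on the predicted decrease comes from. The paper gets it from the \emph{true} cost: the $\limsup$ over base points $\uz\to\overline{\uz}$ in \eqref{eq:gdd} gives $\widehat{J}_\lam(\uz)-\widehat{J}_\lam(\uz+\Upsilon\s)>(\kappa/2)\Upsilon$ uniformly on a neighborhood. It then transfers this to the model through the linearization error and the feasibility of $\Upsilon\s$. You instead try to get it from the model's directional derivative at $\Delta\uz=0$ plus ``a first-order expansion of the convex model'' with a $\uz$-independent $c_3$. That step does not work as argued.

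The penalty part $\lambda\|g(\uz)+\nabla g(\uz)\Delta\uz\|_1$ is polyhedral, with kinks along $t\s$ at $t=|g_i(\uz)|/|\nabla g_i(\uz)^\top\s|$. Since $\overline{\uz}$ is feasible, $g(\overline{\uz})=0$, so these kink locations tend to $0$ as $\uz\to\overline{\uz}$ whenever $\nabla g_i(\overline{\uz})^\top\s\neq0$. Past each kink the slope jumps up by $2\lambda|\nabla g_i(\uz)^\top\s|$. For a convex function, the derivative at $0$ only gives the lower bound $\cL_\lam(t\s)\ge\cL_\lam(0)+t\,\cL_\lam'(0;\s)$, not the upper bound you need. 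Moreover, $\cL_\lam'(0;\s)$ is built from $\mathrm{sign}\,g_i(\uz)$, so it can be far more negative than the slope that survives on $[0,\Upsilon]$. No uniform $c_3$ follows from what you wrote.

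The inequality you want is true, but you must control the slope along the whole segment. Either of the following works.
\begin{itemize}
\item Use sublinearity, $|a+tb|\le|a|+t|b|$ and $[a+tb]_+\le[a]_++t[b]_+$. This gives $\cL_\lam(t\s)-\cL_\lam(0)\le t\,\beta(\uz)+ct^2$ with $\beta(\uz)=\nabla\widehat{J}(\uz)^\top\s+\lambda\sum_i|\nabla g_i(\uz)^\top\s|+\lambda\sum_{j\ \mathrm{active}}[\nabla\overline{g}_j(\uz)^\top\s]_+$. Inactive terms vanish in the model for $\uz$ near $\overline{\uz}$ and small $t$. This $\beta$ is continuous, and for the $[\cdot]_+$ penalty you use, $\beta(\overline{\uz})=\text{d}\widehat{J}_\lam(\overline{\uz},\s)\le-2\eta$ precisely because $\overline{\uz}$ is feasible.
\item More simply, follow the paper. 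The $\limsup$ definition yields $\widehat{J}_\lam(\uz)-\widehat{J}_\lam(\uz+t\s)\ge\eta t$ for all $\uz$ near $\overline{\uz}$ and $t\le t_0$. Your own numerator estimate then gives $\cL_\lam(0)-\cL_\lam(t\s)\ge\eta t-c_1t^2$, i.e.\ $c_3=c_1$.
\end{itemize}

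Finally, the feasibility question you leave open is easier than you suggest. The convex constraints here (boxes and boundary conditions) are affine. So $\langle\nabla h_l(\overline{\uz}),\s\rangle\le0$ for the active $l$, which your separation step already provides (cf.\ \eqref{eq:feasible_s}), makes $\uz+t\s$ feasible for every feasible $\uz$ near $\overline{\uz}$ and every small $t$. No Slater point or projection is needed.
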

\begin{IEEEproof}
    We begin by noting that, since $ \overline{\uz} $  is  feasible  yet not a stationary point, which means $ \overline{\uz}\notin \Gn(\uz) $.
	The subgradient $ \partial \widehat{J}_\lambda(\overline{\uz}) $  forms a closed convex set by virtue of being an intersection of half-spaces. Also, the set
	\begin{equation} \label{eq:conic}
		\left\lbrace \uy\,|\,\uy=\sum_{l\in I(\overline{\uz})}\sigma_l \nabla h_l(\overline{\uz}), \; \text{for some } \sigma_l \geq 0 \right\rbrace \triangleq \Cn(\overline{\uz})
	\end{equation}
	constitutes a closed convex cone as it represents a conic combination of the gradient vectors $ \nabla h_l(\overline{\uz}) $.  Since the Minkowski sum preserves convexity, it follows that $ \Gn(\overline{\uz}) $ is also a closed and convex set. 
    
    Applying the separation theorem for convex sets to $ \Gn(\overline{\uz}) $, we can establish the existence of a unit vector $ \s $ and a scalar $ \kappa > 0 $ such that for all vectors $ \mathfrak{v}\in \Gn(\overline{\uz}) $,
	\begin{align}
		 \langle\mathfrak{v}, \s\rangle \leq -\kappa <0 \overset{\eqref{eq:gdd}}{\implies} \max \{\langle\mathfrak{v}, \s \rangle\,|\,\mathfrak{v}\in \partial \widehat{J}_\lam(\overline{\uz})\} \leq -\kappa. \label{eq:separation}
	\end{align} 
    
     Consequently, we have
	\begin{equation*}
		d\widehat{J}_\lam(\overline{\uz},\s) \coloneqq \limsup\limits_{\mathclap{\substack{\uz\rightarrow \overline{\uz} \\ \Upsilon\rightarrow 0^+}}} \frac{\widehat{J}_\lam(\uz+\Upsilon \s)-\widehat{J}_\lam(\uz)}{\Upsilon} \leq -\kappa.
	\end{equation*}
	This relationship guarantees the existence of positive values  $ \overline{\Upsilon} $ and $ \overline{\epsilon} $ such that for every feasible $ \uz \in \cN(\overline{\uz},\overline{\epsilon}) $ and $ 0<\Upsilon\leq \overline{\Upsilon} $,
	\begin{equation} \label{eq:limit_delta}
		\frac{\widehat{J}_\lam(\uz+\Upsilon s)-\widehat{J}_\lam(\uz)}{\Upsilon} < -\frac{\kappa}{2}.
	\end{equation}
	Another subset of $ D(\overline{\uz}) $ is $ C(\overline{\uz}) $ defined in \eqref{eq:conic}, so \eqref{eq:separation} also holds for all $ \nu \in C(\overline{\uz}) $. This necessarily implies, by~\eqref{eq:separation},
	\begin{equation} \label{eq:feasible_s}
		\langle\nabla h_l(\overline{\uz}), \s\rangle \leq 0, \quad l \in I(\overline{\uz}),
	\end{equation}
	which essentially means $ \s $ is a feasible direction of the active constraints at $ \overline{\uz} $.
	
	When the OCP \eqref{eqn:discretized_convex_subproblem}  is solved at $ \uz $ with trust region radius $ \Upsilon $, obtaining an optimal solution $ \Delta{\uz}^\star $. By a first-order Taylor expansion of $\widehat{J}_\lam$ around $\uz$ (using the GDD definition \eqref{eq:gdd}), the \underline{actual} change in $ \widehat{J}_\lam $ is expressed as
	\begin{align}
		\Delta \widehat{J}_\lam(\uz,\Delta{\uz}^\star) &=\widehat{J}_\lam(\uz)-\widehat{J}_\lam(\uz+\Delta{\uz}^\star) \nonumber \\
		&=\widehat{J}_\lam(\uz)-\cL_\lam(\Delta{\uz}^\star)-o(\|\Delta{\uz}^\star\|) \nonumber \\
		&=\Delta \cL_\lam(\Delta{\uz}^\star)-o(\|\Upsilon\|). \label{eq:delta_j}
	\end{align}
	Accordingly, the ratio becomes
	\begin{equation*}
		\rho(\uz,\Upsilon)=\frac{\Delta \widehat{J}_\lam(\uz,\Delta{\uz}^\star)}{\Delta \cL_\lam(\Delta{\uz}^\star)} = 1-\frac{o(\|\Upsilon\|)}{\Delta \cL_\lam(\Delta{\uz}^\star)}.
	\end{equation*}
	 Let $ \Delta{\uz} = \Upsilon \s $. Since $ \s $ is a feasible direction (as established by \eqref{eq:feasible_s}), $ \Delta{\uz} $ constitutes a feasible solution to the OCP \eqref{eqn:discretized_convex_subproblem},  while $ \Delta{\uz}^\star $  represents the optimal solution. Therefore, we have 
	 \begin{equation}
	 	\cL_\lam(\Delta{\uz}^\star)\leq \cL_\lam(\Delta{\uz})  \implies \Delta \cL_\lam(\Delta{\uz}^\star) \geq \Delta \cL_\lam(\Delta{\uz}) \label{eq:delta_l}
	 \end{equation}
	 
	 As $ \Upsilon \rightarrow 0 $ with, $ 0<\Upsilon\leq \overline{\Upsilon} $, we derive
	 \begin{equation*}
	 	\Delta \widehat{J}_\lam(\uz,\Delta{\uz})=\widehat{J}_\lam(\uz)-\widehat{J}_\lam(\uz+\Delta{\uz}) > \left( \frac{\kappa}{2} \right) \Upsilon.
	 \end{equation*}
	 Substituting $ \Delta{\uz}^\star $ for $ \Delta{\uz} $ in \eqref{eq:delta_j},
	 \begin{equation*}
	 	\Delta \widehat{J}_\lam(\uz,\Delta{\uz})=\Delta \cL_\lam(\Delta{\uz})-o(\|\Upsilon\|) \geq \left( \frac{\kappa}{2} \right) \Upsilon.
	 \end{equation*}
	 Combining this with \eqref{eq:delta_l}, we get
	 \begin{equation*}
	 	\Delta \cL_\lam(\uz,\Delta{\uz}^\star) \geq \Delta \cL_\lam(\Delta{\uz})-o(\|\Upsilon\|) > \left( \frac{\kappa}{2} \right) \Upsilon.
	 \end{equation*}
	 This leads to $ \Delta \cL_\lam(\uz,\Delta{\uz}^\star) > ( \kappa/2 ) \Upsilon + o(\|\Upsilon\|) $, allowing us to express:
	 \begin{equation*}
	 	\rho(\uz,\Upsilon) = 1-\frac{o(\|\Upsilon\|)}{\Delta \cL_\lam(\Delta{\uz}^\star)} > 1-\frac{o(\|\Upsilon\|)}{( \kappa/2 ) \Upsilon + o(\|\Upsilon\|)}.
	 \end{equation*}
	 As $ \Upsilon \rightarrow 0 $, $ \rho(\uz,\Upsilon) \rightarrow 1 $, which means that for any constant $\mathfrak{k}<1$, we can assert $ \rho(\uz,\Upsilon) \geq \mathfrak{k} $.
\end{IEEEproof}

\begin{rem}
    An undesirable situation is one where steps are rejected indefinitely (i.e. by producing solutions that result in $\rho(\uz,\Upsilon) < \rho_0$). Theorem \ref{thm:contraction} provides an assurance that  Algorithm-\ref{alg:uav_trajectory} will not produce such behavior. By contracting $\Upsilon^k$ sufficiently, Theorem \ref{thm:contraction} guarantees that the ratio $\rho^k$ will eventually exceed $\rho_0$, and the algorithm will stop rejecting steps.
\end{rem}

\begin{theorem}
    \textbf{(Limit Point Guarantee:)} When   Algorithm-\ref{alg:uav_trajectory}  generates an infinite sequence $ \{ \uz^k \} $, this sequence  is assured to possess limit points. Moreover, any such limit point, $ \overline{\uz} $, is also a stationary point of the non-convex penalty OCP \eqref{eqn:J_dis_Z_final}.
\end{theorem}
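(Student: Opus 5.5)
The argument will follow the standard trust-region convergence template (Mao--Szmuk--A\c{c}{\i}kme\c{s}e), built on Theorem~\ref{thm:stationarity_finite} and Theorem~\ref{thm:contraction}.

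\emph{Existence of limit points.} Every accepted iterate satisfies the convex constraints $h_l(\uz)\le 0$; these include the actuator bounds $\mathcal{U}$, the attitude bounds $\mathcal{X}$ and the boundary conditions. We also need boundedness of the position and velocity components and of $T$. The cleanest route is to assume compactness of the feasible set of the convex constraints, or to add box bounds on $T$ and the states, which is standard in implementation. A cheaper alternative uses the trust region: $\widehat{J}_\lam$ is nonincreasing along accepted steps and the level set $\{\uz : \widehat{J}_\lam(\uz)\le \widehat{J}_\lam(\uz^0),\ h_l(\uz)\le 0\}$ is bounded. This works because $\widehat{J}_\lam$ grows with $T\|\u\|^2$ and with the $\ell_1$ dynamics defect. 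Either way, $\{\uz^k\}$ lies in a compact set, and Bolzano--Weierstrass gives a convergent subsequence $\uz^{k_j}\to\overline{\uz}$. Closedness of the convex constraints gives $h_l(\overline{\uz})\le 0$.

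\emph{Monotonicity and summability.} For accepted steps, $\rho^k\ge\rho_0$ and Theorem~\ref{thm:stationarity_finite} give
\[
\widehat{J}_\lam(\uz^k)-\widehat{J}_\lam(\uz^{k+1})\ \ge\ \rho_0\,\Delta\cL_\lam^k\ \ge 0 .
\]
Rejected steps leave $\uz^k$ unchanged. Hence $\widehat{J}_\lam(\uz^k)$ is nonincreasing. It is bounded below on the compact set because it is continuous, and indeed $\widehat{J}_\lam\ge 0$. Therefore it converges, and $\sum_k \Delta\cL_\lam^k<\infty$, so $\Delta\cL_\lam^k\to 0$ along accepted iterations.

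\emph{Stationarity of the limit, by contradiction.} Suppose $\overline{\uz}\notin\Gn(\uz)$. Apply Theorem~\ref{thm:contraction} with $\mathfrak{k}=\rho_1$ to obtain $\overline{\Upsilon},\overline{\varepsilon}$ such that for all $\uz^{k_j}\in\cN(\overline{\uz},\overline{\varepsilon})$ and all $\Upsilon\le\overline{\Upsilon}$ we have $\rho\ge\rho_1$. Two consequences follow. First, whenever the radius is at most $\overline{\Upsilon}$ the step is accepted without shrinking. Since the update only contracts when $\rho<\rho_1$, the radius at iterates near $\overline{\uz}$ stays bounded below by $\min\{\Upsilon_{\min},\overline{\Upsilon}/\alpha\}>0$; here $\Upsilon_{\min}$ handles the radius inherited on entering the neighborhood. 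Second, the proof of Theorem~\ref{thm:contraction} gives $\Delta\cL_\lam \ge (\kappa/2)\Upsilon - o(\Upsilon)$ for feasible $\uz$ near $\overline{\uz}$, using the feasible direction $\s$ from \eqref{eq:feasible_s}. With $\Upsilon$ bounded below, $\Delta\cL_\lam^{k_j}\ge c>0$ for all large $j$. This contradicts $\Delta\cL_\lam^k\to0$. Hence $\overline{\uz}\in\Gn(\uz)$, i.e.\ $\overline{\uz}$ is a stationary point of \eqref{eqn:J_dis_Z_final}.

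\emph{Expected main obstacle.} The hard step is the uniform lower bound on the trust radius along the subsequence. The iterates may leave and re-enter $\cN(\overline{\uz},\overline{\varepsilon})$, and the radius could have been driven small elsewhere. I would handle this as follows.
\begin{itemize}
\item Near $\overline{\uz}$, once $\Upsilon\le\overline{\Upsilon}$, every subproblem yields $\rho\ge\rho_1$, so the radius never decreases further there.
\item Each rejection shrinks the radius by the factor $\alpha$, so at most finitely many consecutive rejections can occur before $\Upsilon\le\overline{\Upsilon}$ and the step is accepted.
\item Each accepted step with radius $\Upsilon$ decreases $\widehat{J}_\lam$ by at least $\rho_0(\kappa/4)\Upsilon$.
\item Summability of these decreases then forces either a positive lower bound on the radius, which gives the contradiction above, or the accepted radii to be summable, so the iterates move only a finite total distance.
\end{itemize}
The second alternative is the delicate case. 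I would try to show that it again yields infinitely many accepted steps with radius at least $\overline{\Upsilon}/\alpha$ near $\overline{\uz}$: while $\uz^k$ stays in the neighborhood, the radius is nondecreasing once it is at most $\overline{\Upsilon}$, so it cannot tend to zero there. Either way this contradicts summability.
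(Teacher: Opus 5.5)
Your argument is correct under the paper's own standing assumptions: compactness of the convex feasible set, and a floor $\Upsilon_{\text{low}}$ on the trust radius. With that floor it is complete. It also shares the paper's skeleton: Bolzano--Weierstrass, monotone decrease from $\rho\ge\rho_0$ and Theorem~\ref{thm:stationarity_finite}, and a contradiction from a uniform positive lower bound on $\Delta\cL_\lam$ along the subsequence.

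The genuinely different ingredient is where that lower bound comes from.
\begin{itemize}
\item The paper fixes one comparison point $\widehat{\uz}=\overline{\uz}+\Delta\uz^\star$, the subproblem solution at $\overline{\uz}$ with radius $\delta/2$. It gets $\theta>0$ from the \emph{statement} of Theorem~\ref{thm:stationarity_finite}. It then transfers this to $\uz^{k_i}$ by continuity of $\widehat{J}_\lam$ and of the linearized cost in the base point. Feasibility of $\widehat{\uz}-\uz^{k_i}$ is automatic because the $h_l$ are never linearized, and the triangle inequality handles the trust region.
\item You instead reuse the \emph{interior} of the proof of Theorem~\ref{thm:contraction}: the direction $\s$ and the bound $\Delta\cL_\lam\ge(\kappa/2)\Upsilon-o(\Upsilon)$ for feasible $\uz$ near $\overline{\uz}$.
\end{itemize}
Your route is valid, but it needs more. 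You need the $o(\Upsilon)$ term to be uniform over a neighborhood. You need $\uz+\Upsilon\s$ to be feasible at points $\uz\neq\overline{\uz}$; this is harmless here because all $h_l$ are affine. You also need nestedness of trust regions, which you should state explicitly. It lets you pass from a fixed small radius $\Upsilon_1\le\overline{\Upsilon}$ to the actual radius, giving $\Delta\cL_\lam^{k_j}\ge(\kappa/4)\min\{\delta,\Upsilon_1\}$, where $\delta$ is your lower bound on the radius.

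What your version buys is on the trust-radius side. The paper's bound $\Upsilon^{k_i}\ge\min\{\Upsilon_{\text{low}},\overline{\Upsilon}/\alpha\}$ relies on a floor that Algorithm~\ref{alg:uav_trajectory} does not contain. By applying Theorem~\ref{thm:contraction} with $\mathfrak{k}=\rho_1$ instead of $\rho_0$, you guarantee that small radii are never contracted near $\overline{\uz}$. That is exactly what makes dropping the floor possible.

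Your sketch of the floor-free case needs one repair. The decreases $\rho_0(\kappa/4)\min\{\Upsilon^k,\Upsilon_1\}$ are only available at iterates inside $\cN(\overline{\uz},\overline{\varepsilon})$. So summability only bounds the lengths of steps taken from inside that neighborhood; each step length is at most an $N$-dependent multiple of its radius. The claim that ``the iterates move only a finite total distance'' therefore does not follow. Argue instead as follows:
\begin{itemize}
\item Consider the iterates between a last visit to $\cN(\overline{\uz},\overline{\varepsilon}/2)$ and the next exit from $\cN(\overline{\uz},\overline{\varepsilon})$. Every step in this stretch starts inside $\cN(\overline{\uz},\overline{\varepsilon})$, and together they cover distance at least $\overline{\varepsilon}/2$. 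Hence there are only finitely many such passages.
\item The subsequence returns to $\cN(\overline{\uz},\overline{\varepsilon}/2)$ infinitely often. So the iterates stay in $\cN(\overline{\uz},\overline{\varepsilon})$ from some index $K$ on.
\item From $K$ on, the radius never falls below $\min\{\Upsilon^{K},\overline{\Upsilon}/\alpha\}$ (not $\overline{\Upsilon}/\alpha$ itself). This contradicts summability.
\end{itemize}
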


\begin{IEEEproof}
    The convexity and compactness of the feasible domain of \eqref{eqn:J_dis_Z_final} ensures the existence of at least one convergent subsequence $ \{ \uz^{k_{i}} \} \rightarrow \overline{\uz} $, which is a guaranteed limit point~\cite{Rudin1976} (the Bolzano-Weierstrass theorem).
    

    To establish that $\overline{\uz}$ is stationary, we employ proof by contradiction. Let us assume $\overline{\uz}$ is not a stationary point. From Theorem \ref{thm:contraction}, there exist positive values $ \overline{\Upsilon} $ and $ \overline{\epsilon} $ such that
    
	%
	$$ \rho(\uz,\Upsilon) \geq \rho_0 \quad \forall \; \uz \in \cN(\overline{\uz},\overline{\epsilon}\,) \;\;\text{and}\;\; \Upsilon\in(0,\,\overline{\Upsilon}\,], $$
	where $ \rho_0 $ is chosen arbitrarily small. Without loss of generality, we can assume the entire subsequence $ \{ \uz^{k_{i}} \}$ is within $ \cN(\overline{\uz},\overline{\epsilon}) $, yielding 
	\begin{equation} \label{eq:accept}
	\rho(\uz^{k_{i}},\Upsilon) \geq \rho_0 \quad \forall\; \Upsilon \in(0,\,\overline{\Upsilon}\,].
	\end{equation}

	When the initial trust region parameter is below $ \overline{\Upsilon} $, then \eqref{eq:accept} will be automatically satisfied. However, if it exceeds $ \overline{\Upsilon} $, then multiple reductions may be necessary through the rejection mechanism  (step in line \ref{algo:step_contract} of Algorithm-\ref{alg:uav_trajectory}) before satisfying this requirement.  For each $ k_i $, we define $\widehat{\Upsilon}^{k_i}$ as the final radius that needs to be reduced with $ \widehat{\Upsilon}^{k_i} > \overline{\Upsilon} $. Additionally, let $ \Upsilon^{k_{i}} $ represent the trust region radius after the final rejection step. Now, we have  \eqref{eqn:discretized_convex_subproblem} obtained by linearizing the OCP \eqref{eqn:J_dis_Z_final} about $ \uz^{k_i} $, subject to the constraint $ \|\uz - \uz^{k_i}\|\leq \Upsilon^{k_{i}} $ where $\Upsilon^{k_{i}} = \widehat{\Upsilon}^{k_i}/\alpha >  \overline{\Upsilon}/\alpha$. 
	%
	%
	Given that $ \Upsilon $ cannot decrease beyond a minimum threshold $ \Upsilon_{\text{low}} $, we have
	\begin{equation} \label{min_delta}
	\Upsilon^{k_{i}} \geq \min \{\Upsilon_{\text{low}}, \overline{\Upsilon}/\alpha\} \triangleq \delta > 0.
	\end{equation}
	Also, the expression \eqref{eq:accept} translates to
	\begin{equation} \label{eq:r_alter}
	\widehat{J}_\lam(\uz^{k_{i}})-\widehat{J}_\lam(\uz^{k_{i}+1}) \geq \rho_{0} \Delta \cL^{k_{i}}.
	\end{equation}
	From Theorem \ref{thm:stationarity_finite} ($ \Delta \cL^{k_{i}} \geq 0 $), it is indicated that the penalized cost $\widehat{J}_\lam$ is monotonically decreasing.
	
	Next, to establish a lower bound for $ \Delta \cL^{k_{i}} $, consider $ \Delta{\uz}^\star $ as the solution of the OCP \eqref{eqn:discretized_convex_subproblem} around $ \overline{\uz} $ with trust region radius $ \delta/2 $  (satisfying  $ \|\Delta{\uz}^\star\| \leq \delta/2 $). Defining $ \widehat{\uz}\coloneqq\overline{\uz}+\Delta{\uz}^\star $, we have

	%
	\begin{equation} \label{eq:norm1}
	\|\widehat{\uz}-\overline{\uz}\| \leq \delta/2.
	\end{equation}
	Since $ \overline{\uz} $ is ssumed non-stationary, from Theorem \ref{thm:stationarity_finite} we know that 
	\begin{equation*}
	\Delta \cL_\lam^k(\Delta{\uz}^\star)=\widehat{J}_\lam(\overline{\uz}) - \cL_\lam^k(\Delta{\uz}^\star) \triangleq \theta > 0.
	\end{equation*}
	By continuity properties of $ \widehat{J}_\lam $ and $ \cL_\lam^k $, there exists an $ i_0 > 0 $ such that for all $ i \geq i_0 $
	\begin{align}
	\widehat{J}_\lam(\uz^{k_{i}}) - \cL^{k_i}(\widehat{\uz}-\uz^{k_{i}}) &> \theta/2, \text{ and}  \label{eq:r_continuity} \\
	\|\uz^{k_{i}} - \overline{\uz}\| &< \delta/2.\label{eq:norm2}
	\end{align}
	Combining our distance bounds \eqref{eq:norm1}, \eqref{eq:norm2}, and \eqref{min_delta}, for all $ i \geq i_0 $, we have
	\begin{equation} \label{eq:tr_ki}
	\|\widehat{\uz}-\uz^{k_{i}}\| \leq \|\widehat{\uz}-\overline{\uz}\| + \|\uz^{k_{i}} - \overline{\uz}\| < \delta \leq \Upsilon^{k_{i}},
	\end{equation}
	%
	
	Introducing $ \Delta\widehat{{\uz}}^{k_{i}}\coloneqq\widehat{\uz}-\uz^{k_{i}} $, we see that that $ \Delta\widehat{{\uz}}^{k_{i}} $ represents a feasible solution for the convex subproblem \eqref{eqn:discretized_convex_subproblem} at point  $ (\uz^{k_{i}},\Upsilon^{k_{i}}) $ when $ i \geq i_0 $.  If $ \Delta{\uz}^{k_{i}} $ denotes the optimal solution to this subproblem, then $ \cL^{k_i}(\Delta{\uz}^{k_{i}}) \leq \cL^{k_i}(\Delta\widehat{{\uz}}^{k_{i}}) $, yielding
	\begin{align} \label{eq:theta/2}
	\Delta \cL^{k_i}(\Delta{\uz}^{k_{i}}) &= \widehat{J}_\lam(\uz^{k_{i}}) - \cL^{k_i}(\Delta{\uz}^{k_{i}}) \notag \\
	&\geq \widehat{J}_\lam(\uz^{k_{i}}) - \cL^{k_i}(\Delta\widehat{{\uz}}^{k_{i}}) \notag \\
	&> \theta/2.
	\end{align}
	%
    With \eqref{eq:r_alter} and \eqref{eq:theta/2},  for all $ i \geq i_0 $:
	\begin{equation} \label{eq:last_theta/2}
	\widehat{J}_\lam(\uz^{k_{i}})-\widehat{J}_\lam(\uz^{k_{i}+1}) \geq \rho_{0} \theta/2.
	\end{equation}
	However, since $k_i+1 \leq k_{(i+1)}$, and the penalized cost never increases (refer \eqref{eq:r_alter}), we have $ \widehat{J}_\lam(\uz^{k_{i}+1}) \geq \widehat{J}_\lam(\uz^{k_{(i+1)}}) $, which gives us
	\begin{align*}
	\sum_{i=1}^{\infty} \left( \widehat{J}_\lam(\uz^{k_{i}})-\widehat{J}_\lam(\uz^{k_{i}+1}) \right) &\leq 
	\sum_{i=1}^{\infty} \left( \widehat{J}_\lam(\uz^{k_{i}})-\widehat{J}_\lam(\uz^{k_{(i+1)}}) \right) \\
	&=\widehat{J}_\lam(\uz^{k_{1}})-\widehat{J}_\lam(\overline{\uz}) \leq \infty.
	\end{align*}
	This establishes that the series converges, necessitating:

	$$ \widehat{J}_\lam(\uz^{k_{i}})-\widehat{J}_\lam(\uz^{k_{i}+1}) \;\rightarrow\; 0. $$
This directly contradicts our earlier inequality  $\widehat{J}_\lam(\uz^{k_{i}})-\widehat{J}_\lam(\uz^{k_{i}+1}) \geq \rho_{0} \theta/2$. 	Therefore, our contradiction proves that every limit point $ \overline{\uz} $  must be a stationary point of OCP \eqref{eqn:J_dis_Z_final}
    
\end{IEEEproof}


\section{Numerical Simulation}
\label{sec:simulation}
This section presents comprehensive numerical simulations to validate the effectiveness of the proposed methodology. The simulations employ a quadrotor UAV model with physical parameters specified in Table \ref{tab:uav_params}, while the communication channel  parameters are detailed in Table \ref{tab:comm_params}. The mission scenario requires the UAV to navigate from an initial position $\r_0 = [0,0,10]$ to a final destination $\r_f = [500,500,100]$ (coordinates in meters) while maintaining reliable data relay communications with a ground station located at $\r_{\text{GS}} = [200,400,0]$.
The numerical solution is obtained using a direct collocation method with 100 evenly distributed temporal nodes. The algorithm's convergence parameters are set as follows: maximum iterations (iter\_max) = 50, convergence tolerance $\epsilon = 10^{-4}$, and initial trust region radius $\eta = 1$. To thoroughly evaluate the algorithm's performance, two distinct simulation scenarios are investigated: first, the generation of an optimal trajectory in obstacle-free space, and second, optimal trajectory generation in an urban environment with multiple obstacles.

\begin{table}[!t]
\centering
\caption{Quadrotor UAV Parameters}
\label{tab:uav_params}
\begin{tabular}{@{}lc@{}}
\hline
\textbf{Parameter} & \textbf{Value} \\
\hline
Mass ($m$) & 3 kg \\
Arm length ($l$) & 0.3 m \\
Moment of inertia ($\mathbf{J}$) & diag(0.04, 0.04, 0.08) kg$\cdot$m$^2$ \\
Maximum rotor speed ($\omega_{\text{max}}$) & 600 rad/s \\
Maximum roll angle ($\phi_{\text{max}}$) & 35$^\circ$ \\
Maximum pitch angle ($\theta_{\text{max}}$) & 35$^\circ$ \\
\hline
\end{tabular}
\end{table}

\begin{table}[!t]
\centering
\caption{Communication Model Parameters}
\label{tab:comm_params}
\begin{tabular}{@{}lc@{}}
\hline
\textbf{Parameter} & \textbf{Value} \\
\hline
Transmission power ($P_{\text{tx}}$) & 5 W \\
Path loss exponent ($\beta$) & 2.3 \\
Ground station location ($\mathbf{r}_{\text{GS}}$) & [200, 400, 0] m \\
Atmospheric parameters ($a_1$, $a_2$) & 10, 0.6 \\
Reflection attenuation ($\zeta$) & 0.2 \\
Channel bandwidth ($w$) & 1 MHz \\
Constant parameter $(b)$ & 60  \\
\hline
\end{tabular}
\end{table}

\subsection{Optimal UAV trajectory in obstacle-free space}
In the first simulation scenario, we evaluate the UAV's trajectory optimization under different data relay requirements without any obstacles. Figure \ref{fig:free_space} shows both top and isometric views of the generated trajectories. Three different data relay scenarios were simulated where the UAV needs to transfer 30MB (solid blue line), 50MB (dashed red line), and 70MB (dash-dotted purple line) data. The resulting mission completion times were 502s, 615s, and 825s for 30MB, 50MB, and 70MB data requirements, respectively. This progressive increase in mission duration correlates directly with the increasing data transfer requirements, as evident from the increasingly deviating trajectories from the initial guess (straight line) path.

\begin{figure}[ht]
\centering
\subfigure[Top view]{\includegraphics[width=0.7\linewidth]{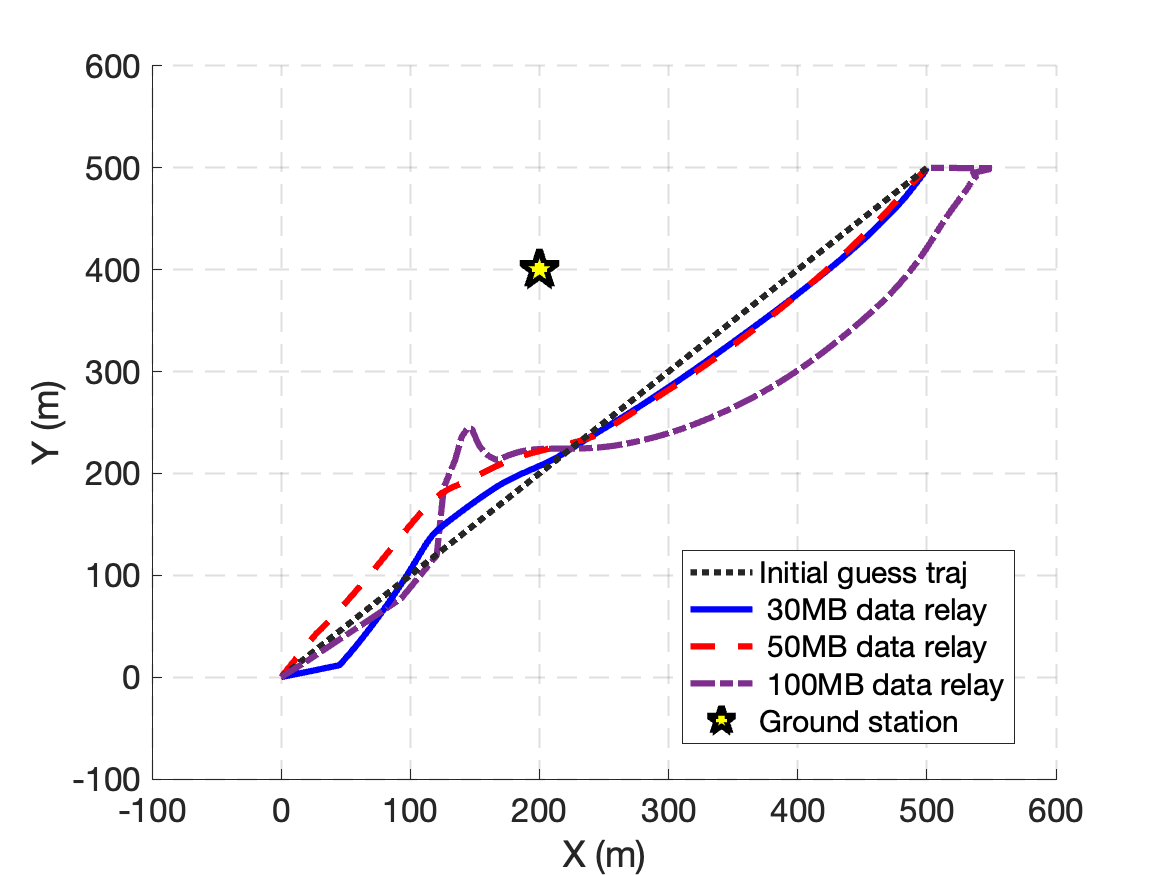}}\\
\subfigure[Isometric view]{\includegraphics[width=0.7\linewidth]{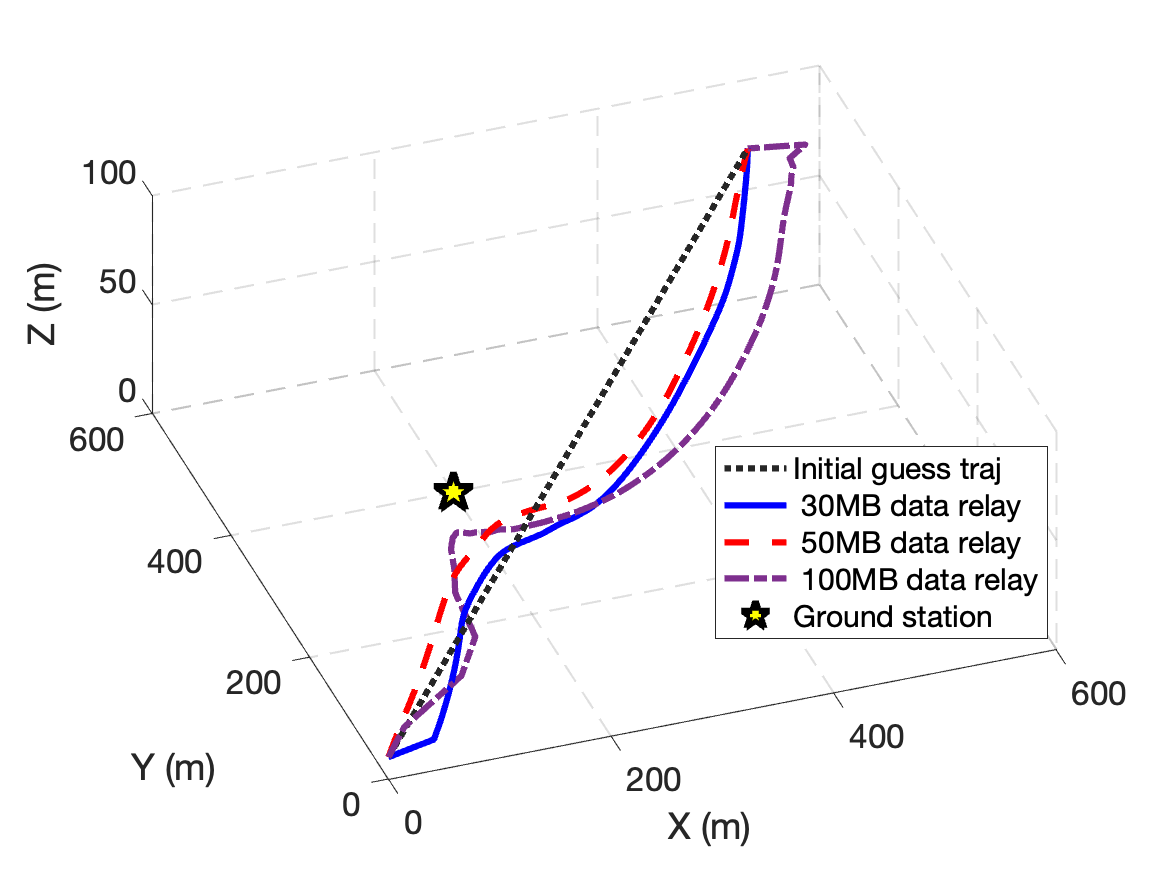}}
\caption{Optimal UAV trajectories for different data relay requirements (30MB, 50MB, 70MB) in an obstacle-free environment.}
\label{fig:free_space}
\end{figure}

The trajectories demonstrate how higher data requirements lead to more pronounced deviations from the initial guess trajectory (dotted black line). This behavior is expected as the UAV needs to maintain closer proximity to the ground station for longer durations to satisfy the increased data transfer requirements while ensuring reliable communication links.

\subsection{Optimal UAV trajectory in an urban environment}
The second simulation scenario introduces two buildings represented as obstacles in the UAV's path, positioned at coordinates [190,200,0] and [420,400,0]. The buildings are approximated by their cylindrical circumference with a safe distance constraint of $d_\text{s} = 40$m. Figure \ref{fig:urban_env} presents the optimal trajectory for a 30MB data relay requirement, where the algorithm successfully generates a path (solid blue line) that deviates from the initial guess (dotted black line) to avoid the obstacles while ensuring the required data transfer. The mission time of 722s is notably longer than the obstacle-free case (502s) for the same 30MB data requirement, which is expected as the UAV takes a more deviated and longer path to avoid the obstacles.
\begin{figure}[ht]
\centering
\subfigure[Top view]{\includegraphics[width=0.6\linewidth]{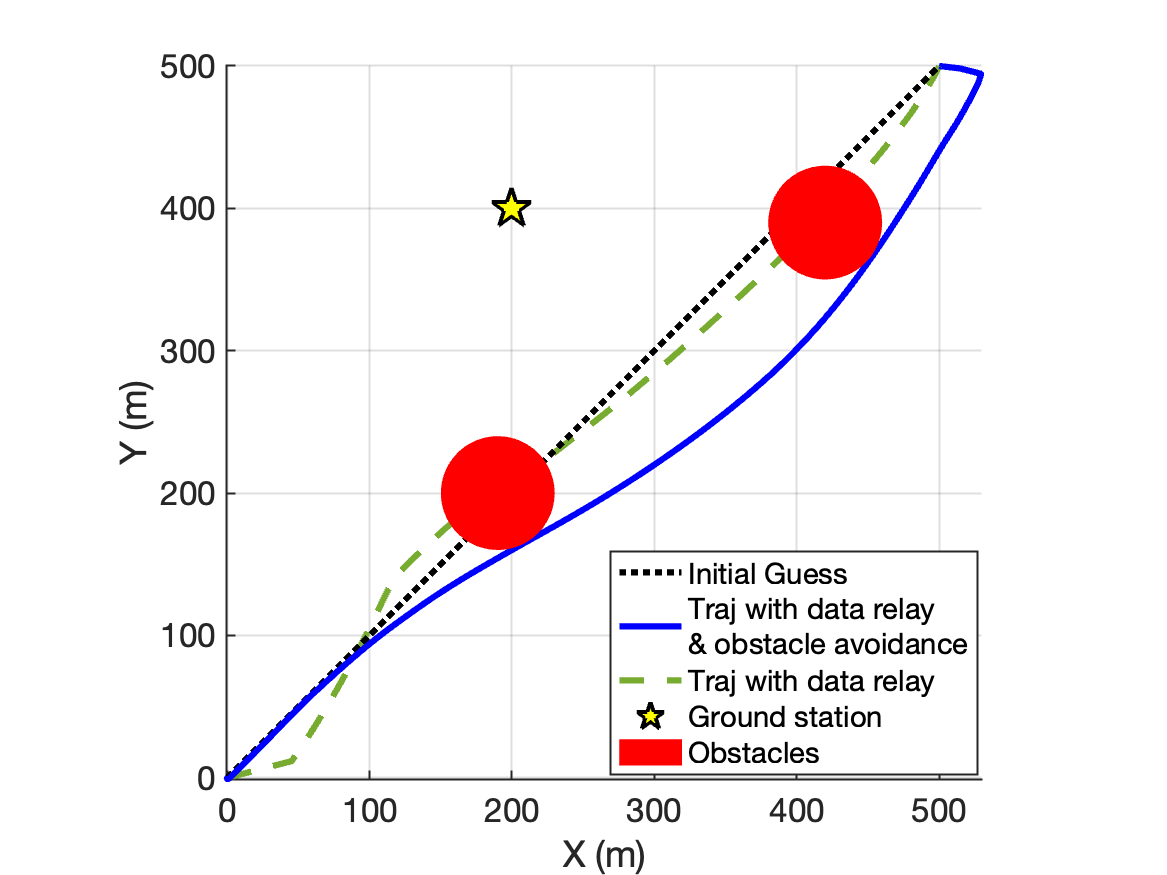}}\\
\subfigure[Isometric view]{\includegraphics[width=0.7\linewidth]{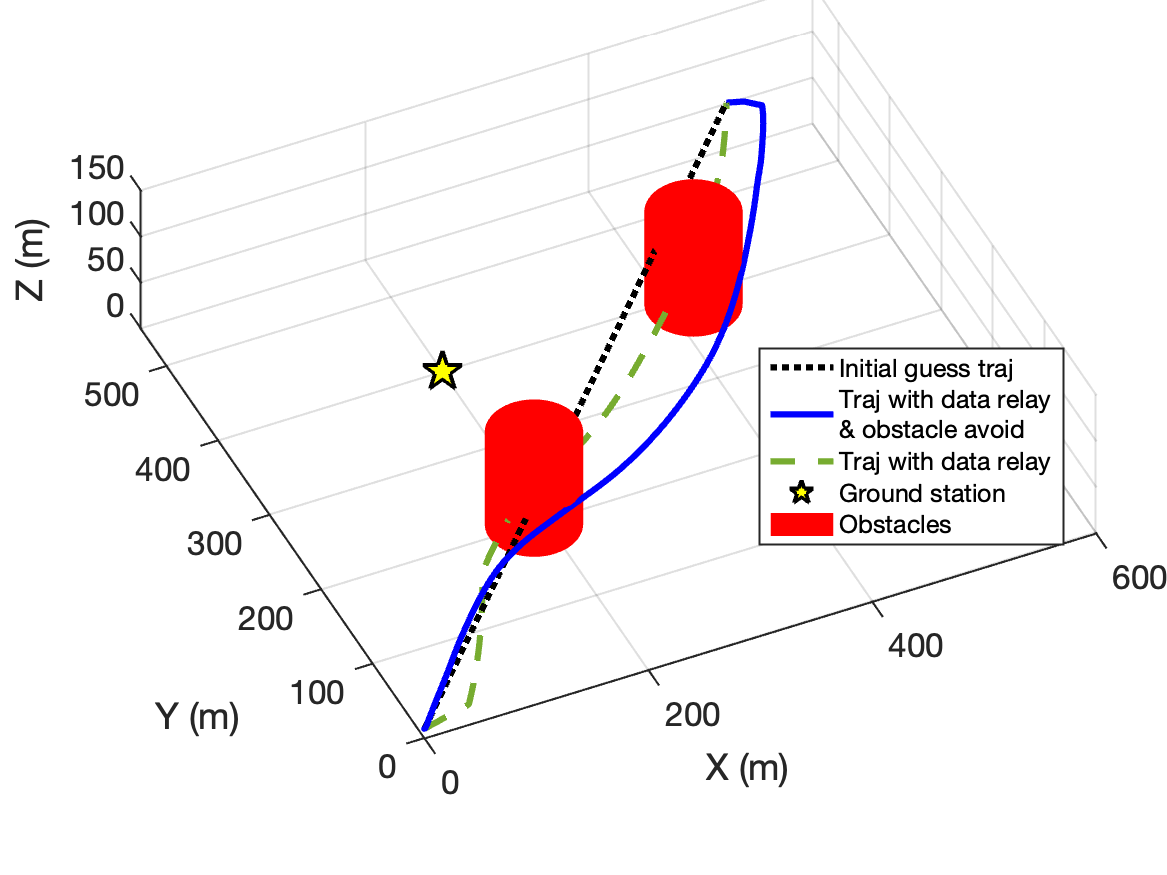}}
\caption{Optimal UAV trajectory with 30MB data relay requirement in urban environment with cylindrical obstacles.}
\label{fig:urban_env}
\end{figure}

The presence of obstacles significantly influences the trajectory generation, requiring more complex path planning to satisfy multiple competing constraints. The optimal trajectory demonstrates the algorithm's ability to balance these objectives effectively: maintaining safe distances from the buildings while ensuring the complete transfer of required data. The results validate the effectiveness of the proposed free-time OCP approach in generating optimal trajectories that satisfy multiple mission constraints in both simple and complex environments.

\section{Conclusion}
\label{sec:conclusion}


This paper presents a comprehensive framework for generating energy-efficient UAV trajectories while ensuring reliable data relay capabilities in urban environments. The methodology addresses key challenges by formulating a free-time optimal control problem that integrates full rigid-body quadrotor dynamics, realistic urban communication models, and obstacle-avoidance constraints. Sequential convex programming converts the inherently non-convex problem into tractable convex subproblems, enabling simultaneous handling of multiple mission requirements.


The framework's effectiveness is demonstrated through extensive numerical simulations in different scenarios, including obstacle-free and urban environments. Results indicate the algorithm's capacity to adapt to varying data relay requirements, with mission times scaling from 502 seconds to 825 seconds as data demands increase from 30 MB to 70 MB in obstacle-free conditions. Urban obstacles necessitate more complex trajectories, as evidenced by an increased mission time of 722 seconds for a 30 MB data requirement, highlighting the algorithm's ability to balance multiple mission objectives. The numerical tractability and ease of implementation make the framework well-suited for autonomous UAV operations in urban settings, where reliable communication and effective obstacle avoidance are critical.

\bibliographystyle{IEEEtran} 
\bibliography{reference}

\end{document}